\def\bb{\begin{eqnarray}}
\def\ee{\end{eqnarray}}
\def\nn{\nonumber}
\begin{document}

\title{Periodic Benjamin-Ono equation with discrete Laplacian
and 2D-Toda Hierarchy}

\author{Jun'ichi Shiraishi}

\address{
Graduate School of 
Mathematical Sciences, The University of Tokyo,\\
3-8-1 Komaba Meguro-ku Tokyo 153-8914, Japan\\
E-mail: shiraish@ms.u-tokyo.ac.jp}

\author{Yohei Tutiya}

\address{
Kanagawa Institute of Technology,\\
1030 Shimo-Ogino Atsugi-city Kanagawa 243-0292, Japan\\
E-mail: tutiya@gen.kanagawa-it.ac.jp}

\begin{abstract}
We study the
relation between 
the periodic Benjamin-Ono equation with 
discrete Laplacian and the two dimensional Toda hierarchy. 
We introduce the tau-functions $\tau_\pm(z)$ for the periodic Benjamin-Ono equation, 
construct two families of integrals of motion $\{M_1,M_2,\cdots\}$, 
$\{\overline{M}_1,\overline{M}_2,\cdots\}$,
and calculate some examples of the bilinear equations 
using the Hamiltonian structure.
We confirmed that 
some of the low lying bilinear equations agree with the ones obtained from 
a certain reduction of the 2D Toda hierarchy. 
\end{abstract}

\keywords{Benjamin-Ono equation, 2D Toda hierarchy}

\bodymatter

\section{Introduction}
\subsection{Periodic Benjamin-Ono equation with discrete Laplacian}

Let $\gamma$ be a complex parameter 
satisfying ${\rm Im}(\gamma)\geq 0$.
Let $x$ and $t$ be real independent variables, and $\eta(x,t)$  
be an analytic function satisfying the periodicity condition $\eta(x+1,t)=\eta(x,t)$.
In Refs.~\refcite{ST} and \refcite{TS},
we considered the integro-differential equation
\begin{eqnarray}
\frac\partial{\partial t}\eta(x,t)
=
\eta(x,t)\cdot 
\frac{i }{2 }
\int_{-1/2}^{1/2}
\hspace{-2.2em}\backslash\hspace{.5em}\,\,\,\,\,
\Bigl(\Delta_\gamma \cot(\pi(y-x)) \Bigl) 
\eta(y,t)dy,
\label{PBO}
\end{eqnarray}
where the discrete Laplacian $\Delta_\gamma$ is defined by
$(\Delta_\gamma f)(x)=f(x-\gamma)-2f(x)+f(x+\gamma)$,
and the integral $\int\hspace{-0.8em}\backslash\hspace{.5em}$
means the Cauchy principal value. 
This can be regarded as a periodic version of the 
Benjamin-Ono \cite{B,O} equation associated with 
the discrete Laplacian $\Delta_\gamma$.

For the sake of simplicity, set $z=e^{2\pi i x}$ and $q=e^{2\pi i \gamma}$. 
By abuse of notation, 
we use the notation $\eta(z)$
to indicate the dependence on $z$. 
{}From the spatial periodicity, 
we have the Fourier series 
expansion $\eta(z)=\sum_{n\in {\bf Z}}\eta_{n}z^{-n}$.
Set $\eta_+(z)=\sum_{n>0}\eta_{-n}z^{n}$
and 
$\eta_-(z)=\sum_{n>0}\eta_{n}z^{-n}$. 
Note that we have $\eta(z)=\eta_+(z)+\eta_0+\eta_-(z)$.
Then the equation (\ref{PBO})
can be expressed as
\bb
&&\partial_{t} \eta(z)
=\eta(z)\sum_{l\neq0}{\rm sgn}(l)(1-q^{|l|})\eta_{-l}z^l\nonumber \\
&&\qquad=
\eta(z)(\eta_+(z) -\eta_+(zq) -\eta_-(z) +\eta_-(z/q)).
\label{BO}\ee

\subsection{Poisson algrbra and Toda field equation}
We show that one can introduce another time $\overline{t}$
and obtain the 2D Toda field equation \cite{UT}
by using the Poisson Heisenberg algebra for our periodic Benjamin-Ono equation 
with discrete Laplacian \cite{ST,TS}.
As for the Hamiltonian structure for the usual Benjamin-Ono equation, see 
Refs. \refcite{Sa} and \refcite{LR}. See Ref. \refcite{ABW} also.

Our Poisson algebra 
is generated by $\alpha_n$ ($n\in {\bf Z}_{\neq 0}$) with the Poisson brackets
\bb
\{\alpha_n,\alpha_m\}={\rm sgn}(n)(1-q^{|n|})\delta_{n+m,0},\label{alpha}
\ee
where ${\rm sgn}(n)=|n|/n$ for $n\neq 0$ and   ${\rm sgn}(0)=0$.
\begin{definition}
Set 
\bb
\tau_+(z)=\exp\left(-\sum_{n>0}\frac{\alpha_{-n}}{1-q^n}z^{n}\right),\quad
\tau_-(z)=\exp\left(-\sum_{n>0}\frac{\alpha_n}{1-q^n}z^{-n}\right).
\ee
We call $\tau_\pm(z)$ the tau-functions.
\end{definition}

Express the dependent variable $\eta(z)$
in terms of the tau-functions and a constant $\varepsilon$ as
\bb
&&\eta(z)=\sum_{n\in {\bf Z}}\eta_n z^{-n}=\varepsilon\exp\left(\sum_{n\neq0}\alpha_nz^{-n}\right)
=\varepsilon\frac{\tau_-(z/q)\tau_+(zq)}{\tau_-(z)\tau_+(z)}.\label{eta}
\ee
We need to  introduce another dependent variables $\xi(z)$ by
\bb
\xi(z)&=&\sum_{n\in {\bf Z}}\xi_n z^{-n}=
{1\over  \varepsilon}\exp\left(-\sum_{n\neq0}\alpha_nq^{-|n|/2}z^{-n}\right)\nn\\
&=&{1\over \varepsilon}\frac{\tau_-(zq^{1/2})\tau_+(zq^{-1/2})}{\tau_-(zq^{-1/2})\tau_+(zq^{1/2})}.\label{xi}
\ee

The Poisson brackets among our dependent variables 
can be calculated as follows.
\begin{lemma}\label{Poisson-Lemma}
We have 
\bb
&&\{\eta(z), \eta(w)\}=
\eta(z)\eta(w)\sum_{l\neq0}{\rm sgn}(l)(1-q^{|l|})\left(\frac{w}{z}\right)^{l}, \label{eta-eta}\\
&&\{\xi(z), \xi(w)\}=
\xi(z)\xi(w)\sum_{l\neq0}{\rm sgn}(l)(q^{-|l|}-1)\left(\frac{w}{z}\right)^{l}, \label{xi-xi}\\
&&\{\eta(z), \xi(w)\}=
\delta(q^{1/2}w/z){\tau_+(zq)\tau_+(z/q)\over \tau_+(z) \tau_+(z)}\nonumber\\
&&\qquad\qquad\qquad-
\delta(q^{-1/2}w/z){\tau_-(zq)\tau_-(z/q)\over \tau_-(z) \tau_-(z)},\label{eta-xi}
 \\
%
&&\{\eta(w),\tau_-(z)\}
=\eta(w)\tau_-(z)\sum_{n>0}\left(\frac{w}{z}\right)^{n},\label{eta-tau-}\\
%
&&\{\eta(w),\tau_+(z)\}
=-\eta(w)\tau_+(z)
\sum_{n>0}\left(\frac{z}{w}\right)^{n},\label{eta-tau+}\\
&&\{\xi(w),\tau_-(z)\}
=-\xi(w)\tau_-(z)\sum_{n>0}q^{-n/2}\left(\frac{w}{z}\right)^{n},\label{xi-tau-}\\
%
&&\{\xi(w),\tau_+(z)\}
=\xi(w)\tau_+(z)
\sum_{n>0}q^{-n/2}\left(\frac{z}{w}\right)^{n}.\label{xi-tau+}
\ee
where $\delta(z)=\sum_{n\in{\bf Z}}z^n$.
\end{lemma}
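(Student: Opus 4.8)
The plan is to reduce every bracket to the fundamental relation (\ref{alpha}) by means of a single master formula. I write each basic object as the exponential of a linear form in the generators: $\eta(z)=\varepsilon\,e^{A(z)}$ with $A(z)=\sum_{n\neq0}\alpha_n z^{-n}$, $\xi(w)=\varepsilon^{-1}e^{B(w)}$ with $B(w)=-\sum_{n\neq0}\alpha_n q^{-|n|/2}w^{-n}$, and $\tau_\pm(z)=e^{C_\pm(z)}$ with $C_-(z)=-\sum_{n>0}\frac{\alpha_n}{1-q^n}z^{-n}$, $C_+(z)=-\sum_{n>0}\frac{\alpha_{-n}}{1-q^n}z^{n}$. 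Since $\{\alpha_n,\alpha_m\}$ is a scalar, the biderivation (Leibniz) property applied to the Taylor expansions, together with commutativity $e^Xe^Y=e^{X+Y}$, gives the master identity $\{e^X,e^Y\}=e^{X+Y}\{X,Y\}$, where $\{X,Y\}$ is the bilinear extension of (\ref{alpha}). Every statement of the lemma is an instance of this.

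Given the master formula, the brackets (\ref{eta-eta}) and (\ref{xi-xi}) are immediate: with $Y=A(w)$ (resp.\ $B(w)$), the $\delta_{n+m,0}$ in (\ref{alpha}) forces $m=-n$ and collapses the double sum to a single sum over $n\neq0$, while the prefactor $\eta(z)\eta(w)$ (resp.\ $\xi(z)\xi(w)$) factors out, reproducing the stated series. The mixed brackets (\ref{eta-tau-})--(\ref{xi-tau+}) are equally direct: in each $\{A\text{ or }B,\,C_\pm\}$ the constraint $m=-n$ again reduces everything to one geometric series, and the denominators $1-q^m$ in $C_\pm$ cancel precisely against the factor $1-q^{|n|}$ coming from (\ref{alpha}), leaving the clean sums $\sum_{n>0}(w/z)^n$, $\sum_{n>0}q^{-n/2}(z/w)^n$, and so on, with signs dictated by $\mathrm{sgn}(n)$.

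The one genuinely delicate computation is (\ref{eta-xi}), which I expect to be the main obstacle. Here $\{\eta(z),\xi(w)\}=\eta(z)\xi(w)\{A(z),B(w)\}$, and after imposing $m=-n$ the coefficient is $\mathrm{sgn}(n)(1-q^{|n|})q^{-|n|/2}$. The crucial observation is that this equals $q^{-n/2}-q^{n/2}$ \emph{uniformly} for all $n\neq0$, the absolute values disappearing once the sign is accounted for. Hence $\{A(z),B(w)\}=\sum_{n\neq0}\bigl(q^{n/2}-q^{-n/2}\bigr)(w/z)^n$ splits into two complete bi-infinite geometric series, each resumming to a formal delta function via $\sum_{n\in{\bf Z}}x^n=\delta(x)$; the two omitted $n=0$ terms cancel, leaving $\{A(z),B(w)\}=\delta(q^{1/2}w/z)-\delta(q^{-1/2}w/z)$.

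It then remains to use the support property of the delta distribution on the prefactor $\eta(z)\xi(w)=e^{A(z)+B(w)}$. On $\delta(q^{1/2}w/z)$ I substitute $w=q^{-1/2}z$ and compute $A(z)+B(q^{-1/2}z)=\sum_{n<0}\alpha_n(1-q^n)z^{-n}$, in which the positive modes cancel identically; using the simplification $\frac{q^{k}+q^{-k}-2}{1-q^{k}}=q^{-k}-1$ this is recognised as $\log\bigl(\tau_+(zq)\tau_+(z/q)/\tau_+(z)^2\bigr)$. Symmetrically, on $\delta(q^{-1/2}w/z)$ the substitution $w=q^{1/2}z$ kills the negative modes and yields $\tau_-(zq)\tau_-(z/q)/\tau_-(z)^2$. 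Combining the two localizations gives exactly the right-hand side of (\ref{eta-xi}). The only points requiring care are justifying the delta-function substitution on the formal prefactor and tracking which chirality of Fourier modes survives at each support. \qed
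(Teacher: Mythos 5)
Your proof is correct; the paper states this lemma without proof, and your direct computation via the master identity $\{e^X,e^Y\}=e^{X+Y}\{X,Y\}$ (legitimate here because the brackets of the $\alpha_n$ are scalars, hence central) is exactly the intended verification. All the coefficient bookkeeping checks out, including the uniform identity ${\rm sgn}(n)(1-q^{|n|})q^{-|n|/2}=q^{-n/2}-q^{n/2}$ that produces the two formal delta functions and the simplification $\frac{q^{n}+q^{-n}-2}{1-q^{n}}=q^{-n}-1$ that identifies the localized prefactors with the tau-function ratios in (\ref{eta-xi}).
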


\begin{remark}
In Ref. \refcite{FHHSY}, a deep connection was found 
between the 
Macdonald Polynomials \cite{Mac} $P_\lambda(x;q,t)$ and the level one representation of the 
quantum algebra of Ding-Iohara ${\cal U}(q,t)$. 
We note that the $\alpha_n$, $\eta(z)$ and $\xi(z)$ in the present paper 
are the level one generators of
the Ding-Iohara algebra in the classical (namely commutative) limit given by 
letting the parameter as
$t\rightarrow 1$. (Note that $t$ here is one of the two parameters $q,t$ 
for the Macdonald polynomials and shold not be confused with the 
time $t$.)
\end{remark}

\begin{proposition}
We have 
$\partial_t  \eta(z)=\{\eta_0,\eta(z)\}$.
Hence we identify $\eta_0$ with our Hamiltonian corresponding to the time $t$.
\end{proposition}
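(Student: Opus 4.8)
The plan is to recognize $\eta_0$ as the constant Fourier mode of $\eta(w)$ and to obtain the claimed flow by simply reading off the $w^0$-coefficient of the Poisson bracket (\ref{eta-eta}). Since $\eta(w)=\sum_{n}\eta_n w^{-n}$, the generator $\eta_0$ is the coefficient of $w^0$ in $\eta(w)$, and it lies in our Poisson algebra because, by (\ref{eta}), it is a (constant-term) polynomial expression in the $\alpha_n$. By bilinearity of the Poisson bracket in the formal-series sense, $\{\eta_0,\eta(z)\}$ is then precisely the coefficient of $w^0$ in the expansion of $\{\eta(w),\eta(z)\}$ in powers of $w$.

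First I would rewrite (\ref{eta-eta}) with its two arguments interchanged, which gives
\begin{eqnarray}
\{\eta(w),\eta(z)\}=\eta(w)\eta(z)\sum_{l\neq0}{\rm sgn}(l)(1-q^{|l|})\left(\frac{z}{w}\right)^l.\nonumber
\end{eqnarray}
Next I would substitute $\eta(w)=\sum_m\eta_m w^{-m}$ and collect powers of $w$: the term $\eta_m w^{-m}$ multiplied by $(z/w)^l=z^l w^{-l}$ carries total $w$-degree $-m-l$, so selecting the $w^0$-coefficient forces $m=-l$. The surviving contribution is
\begin{eqnarray}
\{\eta_0,\eta(z)\}=\eta(z)\sum_{l\neq0}{\rm sgn}(l)(1-q^{|l|})\,\eta_{-l}\,z^l,\nonumber
\end{eqnarray}
which is exactly the right-hand side of (\ref{BO}). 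Hence $\{\eta_0,\eta(z)\}=\partial_t\eta(z)$, and $\eta_0$ generates the time-$t$ flow, as asserted.

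The computation is in essence a bookkeeping of indices, so the only points requiring genuine care are the conventions: the direction of the geometric expansion $(z/w)^l$ versus $(w/z)^l$ after swapping the arguments, the sign factor ${\rm sgn}(l)$, and verifying that the index shift produces $\eta_{-l}$ rather than $\eta_l$. I expect the main (if modest) obstacle to be justifying the ``coefficient of $w^0$'' extraction as a legitimate operation on the bracket, i.e.\ that $\{\,\cdot\,,\eta(z)\}$ commutes with taking the formal constant term in $w$. This follows from the bracket being defined mode-by-mode through (\ref{alpha}) and extended bilinearly to the generating series, but it is worth stating explicitly for rigor before reading off the coefficient.
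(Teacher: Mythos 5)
Your proposal is correct and follows essentially the same route as the paper, which simply cites (\ref{BO}) and (\ref{eta-eta}): extracting the $w^0$-coefficient of $\{\eta(w),\eta(z)\}$ yields $\{\eta_0,\eta(z)\}=\eta(z)\sum_{l\neq0}{\rm sgn}(l)(1-q^{|l|})\eta_{-l}z^l$, which is the right-hand side of (\ref{BO}). Your index bookkeeping and the sign/expansion conventions check out, so the proposal just spells out the details the paper leaves implicit.
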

\begin{proof}
This follows from (\ref{BO}) and (\ref{eta-eta}) in Lamma \ref{Poisson-Lemma}.
\end{proof}

\begin{proposition}
We have 
$\{\eta_0,\xi_0\}=0$.
\end{proposition}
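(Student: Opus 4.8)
The plan is to reduce the bracket of the two zero modes to a coefficient extraction from the two-field bracket already recorded in Lemma~\ref{Poisson-Lemma}. By definition $\eta_0$ is the coefficient of $z^0$ in $\eta(z)=\sum_n \eta_n z^{-n}$ and $\xi_0$ is the coefficient of $w^0$ in $\xi(w)=\sum_m \xi_m w^{-m}$, so by bilinearity of the Poisson bracket $\{\eta_0,\xi_0\}$ is exactly the $z^0w^0$-coefficient of $\{\eta(z),\xi(w)\}$. I would therefore take the expression (\ref{eta-xi}) as the starting point and extract this single coefficient.

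Next I would expand the two delta functions as $\delta(q^{\pm 1/2}w/z)=\sum_{n\in\mathbf{Z}}q^{\pm n/2}w^n z^{-n}$ and treat each of the two terms separately. For the first term, set $F(z)=\tau_+(zq)\tau_+(z/q)/\tau_+(z)^2$. Since $\tau_+(z)=\exp\!\left(-\sum_{n>0}\frac{\alpha_{-n}}{1-q^n}z^n\right)$ has constant term $1$, the ratio $F(z)$ is a power series in non-negative powers of $z$ with $F(0)=1$. Multiplying $F(z)$ by the delta expansion and demanding the $w^0$-coefficient forces $n=0$; demanding the $z^0$-coefficient then forces the power of $z$ supplied by $F$ to vanish, leaving precisely $F(0)=1$. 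The second term is handled identically with $G(z)=\tau_-(zq)\tau_-(z/q)/\tau_-(z)^2$: here $\tau_-(z)=\exp\!\left(-\sum_{n>0}\frac{\alpha_n}{1-q^n}z^{-n}\right)$ has constant term $1$, so $G$ is a one-sided series in non-positive powers of $z$ with $G(0)=1$, and the same extraction again yields $1$. Subtracting the two contributions gives $\{\eta_0,\xi_0\}=1-1=0$.

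I do not expect a genuine obstacle in this argument: the only point needing a little care is the manipulation of the formal delta functions, but this is harmless because each $\tau_\pm$-ratio is a one-sided power series whose constant term is $1$, so the double coefficient extraction selects a single unambiguous term from each summand. The essential structural fact---that the two residue-type contributions are equal and therefore cancel---is already encoded in the symmetric shape of (\ref{eta-xi}), which is what makes the cancellation automatic rather than accidental.
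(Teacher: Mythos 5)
Your proposal is correct and follows exactly the paper's route: the paper's proof is the one-line remark that the claim follows from (\ref{eta-xi}), and your argument is simply the explicit coefficient extraction behind that remark, with each delta function forcing $n=0$ and each $\tau_\pm$-ratio contributing its constant term $1$ so that the two terms cancel. (The only cosmetic point is that for the $\tau_-$-ratio, a series in non-positive powers of $z$, one should say ``constant term $1$'' rather than $G(0)=1$.)
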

\begin{proof}
This follows from (\ref{eta-xi}) in Lamma \ref{Poisson-Lemma}.
\end{proof}

\begin{remark}
Since our Poisson algebra is  the classical 
limit of the 
Ding-Iohara algebra, we have two sets of mutually Poisson commutative families,
having $\eta_0$ and $\xi_0$ respectively, 
in the same way as was discussed in Ref. \refcite{FHHSY}.
As for the explicit form of them, see (\ref{I_k}) and (\ref{Ibar_k}) below.
\end{remark}

Because of the commutativity $\{\eta_0,\xi_0\}=0$, 
we may interpret  $\xi_0$ as another Hamiltonian 
corresponding to time $\overline{t}$. 

\begin{definition}
Define $\partial_{\overline{t}}*= \{*,\xi_0\}$.
\end{definition}

\begin{proposition}
{}From (\ref{eta-xi}) we have 
\begin{eqnarray}
\partial_{\overline{t}}\eta(z)=
{\tau_+(zq)\tau_+(z/q)\over \tau_+(z) \tau_+(z)}-
{\tau_-(zq)\tau_-(z/q)\over \tau_-(z) \tau_-(z)}, \label{eta-tb}
\end{eqnarray}
\end{proposition}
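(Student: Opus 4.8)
The plan is to read off $\partial_{\overline t}\eta(z)=\{\eta(z),\xi_0\}$ as a single Fourier coefficient of the generating-function bracket already computed in Lemma~\ref{Poisson-Lemma}. By definition $\xi(w)=\sum_{n\in{\bf Z}}\xi_n w^{-n}$, so $\xi_0$ is precisely the constant term (the coefficient of $w^0$) in $\xi(w)$. Since the Poisson bracket is bilinear and $\eta(z)$ carries no $w$-dependence, extraction of the $w^0$-coefficient commutes with $\{\eta(z),\,\cdot\,\}$, and thus $\{\eta(z),\xi_0\}$ equals the constant term in $w$ of the right-hand side of (\ref{eta-xi}).

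First I would substitute the explicit formula (\ref{eta-xi}) for $\{\eta(z),\xi(w)\}$ and expand the two delta-functions using $\delta(u)=\sum_{n\in{\bf Z}}u^n$. Writing $\delta(q^{\pm1/2}w/z)=\sum_{n\in{\bf Z}}(q^{\pm1/2}/z)^n\,w^n$, the coefficient of $w^0$ in each delta is just the $n=0$ term, namely $1$, independent of $z$ and of the sign. The prefactors $\tau_+(zq)\tau_+(z/q)/\tau_+(z)^2$ and $\tau_-(zq)\tau_-(z/q)/\tau_-(z)^2$ are themselves $w$-independent, so they survive the extraction unchanged.

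Collecting these observations, the $w^0$-coefficient of (\ref{eta-xi}) is exactly
\bb
\{\eta(z),\xi_0\}
={\tau_+(zq)\tau_+(z/q)\over \tau_+(z)\tau_+(z)}
-{\tau_-(zq)\tau_-(z/q)\over \tau_-(z)\tau_-(z)},
\ee
which is (\ref{eta-tb}) once we invoke the definition $\partial_{\overline t}*=\{*,\xi_0\}$. I do not expect a genuine obstacle here, since Lemma~\ref{Poisson-Lemma} has already absorbed all of the nontrivial algebra; the only point requiring care is the legitimacy of interchanging the constant-term functional in $w$ with the bilinear bracket, which is immediate from bilinearity and the $w$-independence of $\eta(z)$. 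The remaining work is the routine bookkeeping of the delta-function expansion.
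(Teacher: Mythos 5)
Your proposal is correct and matches the paper's (implicit) argument: the paper simply reads (\ref{eta-tb}) off from (\ref{eta-xi}) by taking $\xi_0$ to be the constant term of $\xi(w)$ in $w$, exactly as you do, with the $w^0$-coefficient of each $\delta$-factor equal to $1$ and the $w$-independent tau-function ratios passing through unchanged. No further comment is needed.
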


Now we turn to the 
equations for the tau-functions $\tau_\pm(z) $.
\begin{proposition}
{}From (\ref{eta-tau-}), (\ref{eta-tau+}),  (\ref{xi-tau-}) 
and (\ref{xi-tau+}), we have
\begin{eqnarray}
&&\partial_t \tau_-(z)=\eta_-(z)\tau_-(z),
\qquad 
\partial_t \tau_+(z)=-\eta_+(z)\tau_-(z), \label{partial-tau}\\
&&\partial_{\overline{ t}} \tau_-(zq^{-1/2})=\xi_-(z)\tau_-(zq^{-1/2}),
\qquad 
\partial_{\overline{t} }\tau_+(zq^{1/2})=-\xi_+(z)\tau_-(zq^{1/2}),\nn
\end{eqnarray}
where 
$\xi_+(z)=\sum_{n>0}\xi_{-n}z^{n}$
and 
$\xi_-(z)=\sum_{n>0}\xi_{n}z^{-n}$. 
\end{proposition}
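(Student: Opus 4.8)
The plan is to realize each flow as a Poisson bracket with its Hamiltonian and then to read off the right-hand side by extracting the zeroth Fourier coefficient in the auxiliary variable $w$. Recall that $\partial_t * = \{\eta_0, *\}$ while $\partial_{\overline t} * = \{*, \xi_0\}$, and that $\eta_0$ (resp.\ $\xi_0$) is the constant term in $w$ of the field $\eta(w)$ (resp.\ $\xi(w)$). Since the Poisson bracket is bilinear, I would simply take the constant-term-in-$w$ projection of the field--tau brackets already recorded in Lemma \ref{Poisson-Lemma}. In other words, $\{\eta_0,\tau_\pm(z)\}$ is the coefficient of $w^0$ in $\{\eta(w),\tau_\pm(z)\}$, and similarly for $\xi_0$.

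For the $t$-flow I start from (\ref{eta-tau-}). Writing $\eta(w)=\sum_m \eta_m w^{-m}$ and expanding $\sum_{n>0}(w/z)^n$, the product $\eta(w)\tau_-(z)\sum_{n>0}(w/z)^n$ has a nonzero constant term in $w$ exactly when $m=n>0$, which collects to $\tau_-(z)\sum_{n>0}\eta_n z^{-n}=\eta_-(z)\tau_-(z)$; hence $\partial_t\tau_-(z)=\eta_-(z)\tau_-(z)$. The companion equation is obtained the same way from (\ref{eta-tau+}), where now the series $\sum_{n>0}(z/w)^n$ forces $m=-n<0$ and therefore selects the negative-frequency part $\eta_+(z)=\sum_{n>0}\eta_{-n}z^n$, yielding the stated right-hand side together with the overall minus sign inherited from (\ref{eta-tau+}).

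For the $\overline t$-flow the one genuinely new feature is the $q$-shift in the argument of $\tau_\pm$. Starting from (\ref{xi-tau-}) and replacing $z$ by $zq^{-1/2}$, the prefactor $q^{-n/2}$ in $\sum_{n>0}q^{-n/2}(w/z)^n$ is precisely cancelled by the factor $q^{n/2}$ coming from $(zq^{-1/2})^{-n}$, so the sum collapses to the clean geometric series $\sum_{n>0}(w/z)^n$; the analogous cancellation occurs in (\ref{xi-tau+}) after sending $z\mapsto zq^{1/2}$. Extracting the constant term in $w$ as above, and remembering the reversed order in $\partial_{\overline t}*=\{*,\xi_0\}=-\{\xi_0,*\}$, I obtain $\partial_{\overline t}\tau_-(zq^{-1/2})=\xi_-(z)\tau_-(zq^{-1/2})$ and, from (\ref{xi-tau+}), the corresponding $\tau_+$ identity.

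The computation is entirely mechanical once the projection is set up, so I do not expect a serious obstacle. The only points demanding care are: (i) correctly identifying $\eta_0$ and $\xi_0$ as constant terms in $w$, so that a bracket against a single Fourier mode becomes a coefficient extraction; (ii) tracking the $q^{\pm n/2}$ cancellations, which is exactly what motivates the shifted arguments $zq^{\mp 1/2}$ in the statement; and (iii) the sign flip built into the definition of $\partial_{\overline t}$. With these bookkeeping conventions fixed, the four identities follow immediately.
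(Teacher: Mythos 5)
Your proof is correct and follows exactly the route the paper intends: the proposition is stated as an immediate consequence of Lemma \ref{Poisson-Lemma}, obtained precisely by extracting the $w^0$-coefficient in $w$ of the brackets (\ref{eta-tau-})--(\ref{xi-tau+}), with the $q^{\mp 1/2}$ argument shifts cancelling the $q^{-n/2}$ factors and the antisymmetry flip handling $\partial_{\overline t}*=\{*,\xi_0\}$, just as you describe. Note only that your computation in fact yields $\partial_t\tau_+(z)=-\eta_+(z)\tau_+(z)$ and $\partial_{\overline t}\tau_+(zq^{1/2})=-\xi_+(z)\tau_+(zq^{1/2})$, which confirms that the $\tau_-$ appearing on the right-hand sides of the second and fourth printed equations is a typo for $\tau_+$ (as the subsequent derivation of (\ref{Hirota-t}) also requires).
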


Suitable combinations of these may give us 
equations written in terms of the Hirota derivatives.

\begin{definition}
Define the Hirota derivative $D_{t_1},D_{t_2},\cdots$ by
\begin{eqnarray}
&&\left( D_{t_1}^{k_1}D_{t_2}^{k_2}\cdots \right)f\cdot g\\
&&=
\partial_{a_1}^{k_1} \partial_{a_2}^{k_2}\cdots f(t_1+a_1,t_2+a_2, \cdots)g(t_1-a_1,t_2-a_2, \cdots)
\Biggl|_{a_1=a_2=\cdots=0}.\nn
\end{eqnarray}
\end{definition}

\begin{proposition}
We have the Hirota equations
\begin{eqnarray}
&&D_t \tau_-(z) \cdot \tau_+(z)=
 \varepsilon \tau_-(z/q) \tau_+(zq)-\eta_0 \tau_-(z)\tau_+(z),\label{Hirota-t}\\
&& D_{\overline{t}}\tau_-(zq^{-1/2}) \cdot \tau_+(zq^{1/2})\label{Hirota-tb}\\
&&\qquad =
 \varepsilon^{-1} \tau_-(zq^{1/2}) \tau_+(zq^{-1/2})-\xi_0 \tau_-(zq^{-1/2})\tau_+(zq^{1/2}),\nn\\
 &&
{1\over 2} D_t D_{\overline{t}}  \tau_\pm (z) \cdot  \tau_\pm(z) \label{Toda}\\
 &&
\qquad  +{~}
  \tau_\pm (z q)  \cdot  \tau_\pm (z/q)- \tau_\pm (z) \cdot  \tau_\pm (z) =0.  \nn
\end{eqnarray}
\end{proposition}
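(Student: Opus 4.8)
The plan is to obtain all three bilinear identities directly from the logarithmic form of the flow equations in (\ref{partial-tau}), combined with the product formulas (\ref{eta}) and (\ref{xi}).

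First I would treat (\ref{Hirota-t}). Expanding the Hirota derivative gives
\[
D_t\tau_-(z)\cdot\tau_+(z)=(\partial_t\tau_-(z))\,\tau_+(z)-\tau_-(z)\,(\partial_t\tau_+(z)).
\]
Rewriting (\ref{partial-tau}) as logarithmic derivatives, $\partial_t\log\tau_-(z)=\eta_-(z)$ and $\partial_t\log\tau_+(z)=-\eta_+(z)$, this collapses to $(\eta_-(z)+\eta_+(z))\,\tau_-(z)\tau_+(z)$. Since the Fourier decomposition $\eta(z)=\eta_+(z)+\eta_0+\eta_-(z)$ gives $\eta_+(z)+\eta_-(z)=\eta(z)-\eta_0$, and since (\ref{eta}) gives $\eta(z)\,\tau_-(z)\tau_+(z)=\varepsilon\,\tau_-(z/q)\tau_+(zq)$, we land exactly on (\ref{Hirota-t}). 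Equation (\ref{Hirota-tb}) follows by the identical manipulation with $t\to\overline t$, the shifted tau-functions $\tau_\pm(zq^{\mp1/2})$, the flows $\partial_{\overline t}\log\tau_-(zq^{-1/2})=\xi_-(z)$ and $\partial_{\overline t}\log\tau_+(zq^{1/2})=-\xi_+(z)$, and the product formula (\ref{xi}) in place of (\ref{eta}).

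For the Toda equation (\ref{Toda}) I would first reduce it to a statement about $\log\tau_\pm$. Because $D_t,D_{\overline t}$ act only on the time variables while $z$ is a spectator parameter, the standard identity ${\textstyle\frac12}D_tD_{\overline t}\,f\cdot f=f^2\,\partial_t\partial_{\overline t}\log f$ applies with $f=\tau_\pm(z)$. Dividing (\ref{Toda}) through by $\tau_\pm(z)^2$ reduces the claim to
\[
\partial_t\partial_{\overline t}\log\tau_\pm(z)=1-\frac{\tau_\pm(zq)\,\tau_\pm(z/q)}{\tau_\pm(z)^2}.
\]
The two flows commute because $\{\eta_0,\xi_0\}=0$ (via the Jacobi identity), so I may compute $\partial_t\partial_{\overline t}=\partial_{\overline t}\partial_t$ in whichever order is convenient. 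Using the logarithmic flows above, $\partial_t\partial_{\overline t}\log\tau_+(z)=-\partial_{\overline t}\eta_+(z)$ and $\partial_t\partial_{\overline t}\log\tau_-(z)=\partial_{\overline t}\eta_-(z)$.

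It then remains to extract $\partial_{\overline t}\eta_\pm(z)$ from (\ref{eta-tb}) by separating Fourier modes. Here I would use that $\tau_+(z)\in 1+z\,{\bf C}[[z]]$ and $\tau_-(z)\in 1+z^{-1}{\bf C}[[z^{-1}]]$, so that the first term $\tau_+(zq)\tau_+(z/q)/\tau_+(z)^2$ on the right of (\ref{eta-tb}) is a series in nonnegative powers of $z$ with constant term $1$, while the second term $\tau_-(zq)\tau_-(z/q)/\tau_-(z)^2$ is a series in nonpositive powers of $z$ with constant term $1$. Matching against $\eta(z)=\eta_+(z)+\eta_0+\eta_-(z)$ mode by mode yields $\partial_{\overline t}\eta_0=1-1=0$ (consistent with $\{\eta_0,\xi_0\}=0$) together with
\[
\partial_{\overline t}\eta_+(z)=\frac{\tau_+(zq)\,\tau_+(z/q)}{\tau_+(z)^2}-1,\qquad
\partial_{\overline t}\eta_-(z)=1-\frac{\tau_-(zq)\,\tau_-(z/q)}{\tau_-(z)^2}.
\]
Substituting these into $-\partial_{\overline t}\eta_+$ and $\partial_{\overline t}\eta_-$ produces precisely the reduced identity for both signs, finishing the proof. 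I expect this Fourier-projection step to be the main obstacle: one must verify that the two terms of (\ref{eta-tb}) share no modes beyond the constant, so that the positive-power part of $\partial_{\overline t}\eta$ comes entirely from the $\tau_+$ term and the negative-power part entirely from the $\tau_-$ term. Everything else is routine substitution.
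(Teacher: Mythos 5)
Your proposal is correct and follows essentially the same route as the paper: equations (\ref{Hirota-t}) and (\ref{Hirota-tb}) via the logarithmic flows (\ref{partial-tau}) combined with (\ref{eta}) and (\ref{xi}), and (\ref{Toda}) from (\ref{eta-tb}) together with (\ref{partial-tau}). Your Fourier-mode separation for $\partial_{\overline{t}}\eta_\pm(z)$ is sound, since $\tau_+(zq)\tau_+(z/q)/\tau_+(z)^2$ contains only nonnegative powers of $z$ and $\tau_-(zq)\tau_-(z/q)/\tau_-(z)^2$ only nonpositive ones, each with constant term $1$; this merely spells out details the paper's one-line argument leaves implicit.
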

\begin{proof}
{}From (\ref{partial-tau}), we have 
$\partial_t\tau_-(z) \cdot \tau_+(z)-
 \tau_-(z) \cdot \partial_t \tau_+(z)=(\eta(z)-\eta_0) \tau_-(z)\tau_+(z)$.
Using (\ref{eta}) we have (\ref{Hirota-t}). The equation (\ref{Hirota-tb}) 
can be derived in the same way.
Eq. (\ref{Toda})
is obtained from (\ref{eta-tb}) and (\ref{partial-tau}).
\end{proof}

\begin{remark}
Note that Eq.  (\ref{Toda}) is nothing but the Toda field equation 
written in terms of the tau-function \cite{UT}.
\end{remark}

One finds that the Heisenberg generators 
correspond to the standard dependent variables 
of the Toda field theory .
\begin{definition}
Set
\begin{eqnarray}
\phi_+(z)=\sum_{n>0}\alpha_{-n}z^n,
\qquad 
\phi_-(z)=-\sum_{n>0}\alpha_{n}z^{-n}.
\end{eqnarray}
\end{definition}

\begin{proposition}
The $\phi_\pm (z)$ satisfy the Toda field equation
\begin{eqnarray}
\partial_t \partial_{\overline{t}}
\phi_\pm(z)
=
e^{\phi_\pm(z)-\phi_\pm(z/q)}-
e^{\phi_\pm(zq)-\phi_\pm(z)}.\label{TodaToda}
\end{eqnarray}
\end{proposition}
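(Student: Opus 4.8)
The plan is to realize the Heisenberg field $\phi_\pm(z)$ as a multiplicative (in the variable $z$) logarithmic derivative of the tau-function, and then feed this into the bilinear Toda equation (\ref{Toda}) already established above. The first step is to record the algebraic identity relating $\phi_\pm$ and $\tau_\pm$. Comparing the defining series and telescoping the factor $1-q^n$: since $\log\tau_+(z)=-\sum_{n>0}\frac{\alpha_{-n}}{1-q^n}z^n$, the combination $\log\tau_+(zq)-\log\tau_+(z)$ reinstates the factor $(1-q^n)$ that cancels the denominator, giving $\phi_+(z)=\log\bigl(\tau_+(zq)/\tau_+(z)\bigr)$. An identical computation for $\tau_-$ yields $\phi_-(z)=\log\bigl(\tau_-(z)/\tau_-(z/q)\bigr)$.

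Next I would recast (\ref{Toda}) into non-bilinear form. Using the standard Hirota identity $\frac{1}{2}D_tD_{\overline t}\,f\cdot f=f^2\,\partial_t\partial_{\overline t}\log f$, dividing (\ref{Toda}) through by $\tau_\pm(z)^2$ converts it into
\[
\partial_t\partial_{\overline t}\log\tau_\pm(z)=1-\frac{\tau_\pm(zq)\,\tau_\pm(z/q)}{\tau_\pm(z)^2}.
\]
This is the bridge from the tau-function picture to the field picture, and it holds at every argument, in particular after a shift $z\mapsto zq$ or $z\mapsto z/q$.

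Finally I would apply $\partial_t\partial_{\overline t}$ to the identities of the first step. For the $+$ case, $\partial_t\partial_{\overline t}\phi_+(z)=\partial_t\partial_{\overline t}\log\tau_+(zq)-\partial_t\partial_{\overline t}\log\tau_+(z)$; inserting the displayed relation at arguments $zq$ and $z$ and cancelling the two constant terms leaves
\[
\partial_t\partial_{\overline t}\phi_+(z)=\frac{\tau_+(zq)\tau_+(z/q)}{\tau_+(z)^2}-\frac{\tau_+(zq^2)\tau_+(z)}{\tau_+(zq)^2}.
\]
It then remains only to check, using $\phi_+(z)=\log\bigl(\tau_+(zq)/\tau_+(z)\bigr)$ once more, that these two ratios equal $e^{\phi_+(z)-\phi_+(z/q)}$ and $e^{\phi_+(zq)-\phi_+(z)}$ respectively, which gives (\ref{TodaToda}). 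The $-$ case is identical after shifting $z\mapsto z/q$.

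The computation carries no analytic difficulty, since everything is a formal identity among the series coefficients; the one place requiring care is keeping the multiplicative $q$-shifts $z\mapsto zq^{\pm1}$ consistent when passing between the $\log\tau$ relation and the telescoped expression for $\phi_\pm$, so that the constant $1$'s cancel cleanly and the surviving ratios land on exactly the arguments appearing in the exponentials of (\ref{TodaToda}). I expect that bookkeeping of the shifts to be the only genuine obstacle.
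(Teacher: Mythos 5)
Your argument is correct, but it follows a genuinely different route from the paper's. The paper proves (\ref{TodaToda}) directly from the Poisson brackets: citing (\ref{alpha}) and (\ref{eta-xi}), one first computes $\partial_t\phi_+(z)=\{\eta_0,\phi_+(z)\}=\sum_{n>0}(1-q^n)\eta_{-n}z^n=\eta_+(z)-\eta_+(zq)$ from the Heisenberg relations, then applies $\partial_{\overline{t}}$ using the $\{\eta,\xi\}$ bracket (equivalently the positive-mode part of (\ref{eta-tb})) and identifies the resulting tau-ratios with $e^{\phi_+(z)-\phi_+(z/q)}-e^{\phi_+(zq)-\phi_+(z)}$; the $-$ case is parallel. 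You instead first establish the dictionary $\phi_+(z)=\log\bigl(\tau_+(zq)/\tau_+(z)\bigr)$, $\phi_-(z)=\log\bigl(\tau_-(z)/\tau_-(z/q)\bigr)$ by telescoping the $(1-q^n)$ factors, and then deduce (\ref{TodaToda}) as a purely formal consequence of the already-proven bilinear equation (\ref{Toda}) via the identity $\tfrac12 D_tD_{\overline{t}}f\cdot f=f^2\,\partial_t\partial_{\overline{t}}\log f$; all your shift bookkeeping checks out (e.g. $e^{\phi_+(z)-\phi_+(z/q)}=\tau_+(zq)\tau_+(z/q)/\tau_+(z)^2$), and the constant $1$'s do cancel as claimed. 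Of course both arguments ultimately rest on the same brackets, since (\ref{Toda}) was itself derived from (\ref{eta-tb}) and (\ref{partial-tau}); but your version has the merit of making explicit the standard tau-function/field equivalence, i.e.\ it simultaneously proves the remark following (\ref{Toda}) that the bilinear equation \emph{is} the Toda field equation in tau-function form, whereas the paper's direct computation is more self-contained and does not presuppose the earlier Hirota proposition.
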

\begin{proof}
This follows from Eqs. (\ref{alpha}) and (\ref{eta-xi}).
\end{proof}

Motivated by the appearance  of the Toda field equation  (\ref{TodaToda}),
in this article we will try to understand how 
the 2D Toda hierarchy appears from the point of view of 
the Hamiltonian structure.

The present paper is organized as follows.
In Section 2, we recall the Hirota-Miwa equation\cite{Hi,Mi} (\ref{HM}) 
for the 2D Toda hierarchy.
Based on the $n$-soliton solutions,
we derive some variants of the bilinear equations (Proposition \ref{3HM}).
In Section 3, two sets of integrals of motion
$M_1,M_2,\cdots$ and $\overline{M}_1,\overline{M}_2,\cdots$
are introduced (Definition \ref{MMbar}). 
Since at present we lack enough technologies
to handle the evolution equations in general, 
we need to restrict ourself to some low lying cases. 
To show some evidences of the agreement, 
we check up to certain degree that exactly the same 
equations are obtained both from  
Proposition \ref{3HM} and the Hamiltonian $M_k$'s.
Our observation is summarized in Conjecture \ref{maincon}.

\section{Hirota-Miwa equation for 2D Toda hierarchy}
\subsection{Hirota-Miwa equation}
We briefly recall the Hirota-Miwa equation \cite{Hi,Mi} for the 2D Toda
hierarchy \cite{UT}. Using the $n$-soliton solution,
we derive several variants of bilinear equations
which can be connected to Eqs. (\ref{Hirota-t}), (\ref{Hirota-tb}) and
(\ref{Toda}) obtained in the 
previous section.

Let $s$, $t=(t_1,t_2,\cdots)$ and $\overline{t}=(\overline{t}_1,\overline{t}_2,\cdots)$
be independent variables, and 
$\tau(s,t,\overline{t})$ be the tau-function of the
2D Toda hierarchy.
For a parameter $\lambda$,
we use the standard notation for the infinite vector
$[\lambda]=(\lambda,{1\over 2}\lambda^2,{1\over 3}\lambda^3,\cdots)$.

The Hirota-Miwa equation for the 2D Toda hierarchy is written as follows.
\begin{eqnarray}
&&(1-\alpha\beta)\tau(s,t,\bar{t})\tau(s,t+[\alpha],\bar{t}+[\beta])-
\tau(s,t+[\alpha],\bar{t})\tau(s,t,\bar{t}+[\beta]) \nn\\
&&
+~\alpha \beta
\tau(s+1,t+[\alpha],\bar{t})\tau(s-1,t,\bar{t}+[\beta])=0.\label{HM}
\end{eqnarray}

It is well known that the $n$-soliton solution to the Hirota-Miwa equation 
is given by
\begin{eqnarray}
&&\tau(s,t,\overline{t})=
\sum_{r=0}^n
 \sum_{I\subset \{1,2,\cdots,n\}\atop |I|=r }
\prod_{\{i,j\}\subset I \atop
i<j} {(\lambda_i-\lambda_j)(\mu_i-\mu_j)\over 
(\lambda_i-\mu_j)(\mu_i-\lambda_j)}\\
&&\qquad\qquad\times
\prod_{k\in I} 
(\lambda_k/\mu_k)^s
e^{\sum_{i=1}^\infty (t_i \lambda_k^i+\overline{t}_i \lambda_k^{-i})-
\sum_{i=1}^\infty (t_i \mu_k^i+\overline{t}_i \mu_k^{-i})}.\nn
\end{eqnarray}

Let $a_1,\cdots,a_n$ be parameters.
Set $\lambda_k=a_k,\mu_k=q a_k$ for $k=1,2,\cdots,n$.
Write $z=q^{-s}$. Then we have $(\lambda_k/\mu_k)^s=q^{-s}=z$.

We define $\tau_+(z,t,\overline{t})$ by 
the $n$-soliton solution $\tau(s,t,\overline{t})$ of 2D Toda hierarchy 
under this specialization ($\lambda_k=a_k,\mu_k=q a_k$).
\begin{definition}
Set
\begin{eqnarray}
&&\tau_+(z,t,\overline{t})=
\sum_{r=0}^n
z^r \sum_{I\subset \{1,2,\cdots,n\}\atop |I|=r }
\prod_{\{i,j\}\subset I \atop
i<j} {(a_i-a_j)^2\over (a_i-qa_j)(a_i-q^{-1}a_j)}\label{tau-plus}\\
&&\qquad\qquad
 \times\prod_{k\in I} e^{
 \sum_{i=1}^\infty (1-q^i)t_ia_k^i + \sum_{i=1}^\infty (1-q^{-i})\overline{t}_ia_k^{-i} }.\nn
\end{eqnarray}
\end{definition}

Note that $\tau_+(z,t,\overline{t})$ is a polynomial in $z$ whose degree is $n$.

To introduce $\tau_-(z,t,\overline{t})$, we need a Lemma.
\begin{lemma}\label{tau+totau-}
We have
\begin{eqnarray*}
&&
\tau_+(z,t,\bar{t}-[\beta])\\
&&=
\sum_{r=0}^n
z^r \sum_{I\subset \{1,2,\cdots,n\}\atop |I|=r }
\prod_{\{i,j\}\subset I \atop
i<j} {(a_i-a_j)^2\over (a_i-qa_j)(a_i-q^{-1}a_j)}\\
&&\qquad\times
\prod_{k\in I} 
{1-\beta/a_k\over 1-\beta/qa_k}
e^{
 \sum_{i=1}^\infty (1-q^i)t_ia_k^i + \sum_{i=1}^\infty (1-q^{-i})\overline{t}_ia_k^{-i} }\\
&&= z^n
\prod_{1\leq i<j\leq n} {(a_i-a_j)^2\over (a_i-qa_j)(a_i-q^{-1}a_j)}\\
&&\qquad \times
\prod_{k=1}^n 
{1-\beta/a_k\over 1-\beta/qa_k}e^{
 \sum_{i=1}^\infty (1-q^i)t_ia_k^i + \sum_{i=1}^\infty (1-q^{-i})\overline{t}_ia_k^{-i} }\\
&&\qquad \times\sum_{r=0}^n
z^{-r} \sum_{I\subset \{1,2,\cdots,n\}\atop |I|=r }
\prod_{\{i,j\}\subset I \atop
i<j} {(a_i-a_j)^2\over (a_i-qa_j)(a_i-q^{-1}a_j)}\\
&&\qquad \times
\prod_{k\in I} d_k (\beta)
e^{-
 \sum_{i=1}^\infty (1-q^i)t_ia_k^i - \sum_{i=1}^\infty (1-q^{-i})\overline{t}_ia_k^{-i} },
 \end{eqnarray*}
 where
\begin{eqnarray*}
&&d_k(\beta)=
{1-\beta/q a_k\varepsilon \over 1-\beta/a_k}
\prod_{j\neq k\atop 1\leq j\leq n}
 { (a_k-qa_j)(a_k-q^{-1}a_j)\over (a_k-a_j)^2}.
\end{eqnarray*}
\end{lemma}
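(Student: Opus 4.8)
The plan is to establish the two displayed equalities in turn: the first is a direct evaluation of the shift $\bar t\mapsto\bar t-[\beta]$ in (\ref{tau-plus}), and the second is a reflection identity $z^{r}\mapsto z^{n-r}$ obtained from the complementation involution $I\mapsto I^{c}$ on subsets of $\{1,\dots,n\}$.

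For the first equality I would substitute $\bar t_i\mapsto\bar t_i-\frac1i\beta^i$ into (\ref{tau-plus}); since $[\beta]$ shifts only the $\bar t$-variables, the factors $z^{r}$ and the pair products are untouched, and each exponential $\exp(\sum_{i\ge1}(1-q^{-i})\bar t_i a_k^{-i})$ acquires the extra factor $\exp(-\sum_{i\ge1}\frac1i(1-q^{-i})(\beta/a_k)^i)$. Summing the two logarithmic series through $-\sum_{i\ge1}\frac1i x^i=\log(1-x)$ collapses this factor to $\frac{1-\beta/a_k}{1-\beta/(qa_k)}$ for each $k\in I$, which is exactly the weight displayed in the first right-hand side. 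This step is routine.

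For the second equality I would rewrite each term indexed by $I$ in terms of its complement $K=I^{c}$. The key algebraic observation is that the pair factor $\frac{(a_i-a_j)^2}{(a_i-qa_j)(a_i-q^{-1}a_j)}$ is symmetric under $i\leftrightarrow j$, so the total product $A_{\mathrm{full}}=\prod_{1\le i<j\le n}(\cdots)$ factors canonically as $A_{\mathrm{full}}=A_I\,A_K\,B_{IK}$ into the within-$I$, within-$K$, and cross products. Pulling $z^{n}$, $A_{\mathrm{full}}$ and the global product $\prod_{k=1}^{n}(\cdots)$ out front and using $z^{|I|}=z^{n}z^{-|K|}$, the claim reduces to the per-block identity
\[
\prod_{k\in K} d_k(\beta)\,\prod_{k\in K}\frac{1-\beta/a_k}{1-\beta/(qa_k)}=\frac{1}{A_K^{2}\,B_{IK}}.
\]
I would check this by expanding $\prod_{k\in K}d_k(\beta)$: the rational product $\prod_{j\ne k}\frac{(a_k-qa_j)(a_k-q^{-1}a_j)}{(a_k-a_j)^2}$ inside $d_k$, multiplied over $k\in K$, hits each within-$K$ pair twice (yielding $A_K^{-2}$, by the symmetry above) and each $K$--$I$ cross pair once (yielding $B_{IK}^{-1}$), while the remaining scalar factor of $d_k$ is precisely the reciprocal of the shift weight from the first step and cancels the second product on the left.

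The step I expect to be the main obstacle is exactly this bookkeeping: one must keep straight that multiplying the single-variable products of the $d_k$ over $k\in K$ double-counts the pairs internal to $K$ but counts each $K$--$I$ cross pair only once, and one must confirm that the global prefactor $\prod_{k=1}^{n}(\cdots)$ together with the reciprocal scalar factors restores exactly the original weight $\prod_{k\in I}(\cdots)$ on each surviving block $I$. Once the $i\leftrightarrow j$ symmetry is used to make the split $A_{\mathrm{full}}=A_I A_K B_{IK}$ unambiguous, the remaining verification is term by term and no further difficulty arises.
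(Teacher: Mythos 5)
Your argument is correct, and there is nothing in the paper to compare it against: the lemma is stated without proof, and the two steps you describe are exactly the ones needed. The first equality is indeed the resummation $-\sum_{i\ge1}\tfrac1i(1-q^{-i})(\beta/a_k)^i=\log\frac{1-\beta/a_k}{1-\beta/(qa_k)}$, and the second is the complementation $I\mapsto K=I^{c}$ combined with the factorization $A_{\mathrm{full}}=A_IA_KB_{IK}$ (legitimate because $(a_i-qa_j)(a_i-q^{-1}a_j)=a_i^2+a_j^2-(q+q^{-1})a_ia_j$ is symmetric in $i,j$); your counting of the within-$K$ pairs (twice) versus the $K$--$I$ cross pairs (once) in $\prod_{k\in K}\prod_{j\neq k}c_{kj}^{-1}$ is the whole content of the reduction.

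One point you should not leave implicit: your cancellation step needs the scalar part of $d_k(\beta)$ to be $\frac{1-\beta/(qa_k)}{1-\beta/a_k}$, i.e.\ exactly the reciprocal of the shift weight, whereas the paper prints $\frac{1-\beta/(qa_k\varepsilon)}{1-\beta/a_k}$. With the printed factor the two sides of the lemma differ by $\prod_{k\in K}\frac{1-\beta/(qa_k\varepsilon)}{1-\beta/(qa_k)}$, which cannot be identically $1$; and since the left-hand side $\tau_+(z,t,\bar t-[\beta])$ contains no $\varepsilon$ at all, the printed $\varepsilon$ must be a typo. (The corrected $d_k$ is also the one forced by the consistency of the two expressions given for $\tau_-$, whose prefactor $\frac{1-q^{n-1}\varepsilon/a_k}{1-q^n\varepsilon/a_k}$ is the reciprocal of the shift weight at $\beta=q^n\varepsilon$.) So your proof establishes the evidently intended statement; just record the correction explicitly rather than passing over it.
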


Now we define $\tau_-(z,t,\overline{t})$ by the following Laurent polynomial.
\begin{definition}
Set
\begin{eqnarray}
&&\tau_-(z,t,\overline{t})\nn\\
&&=\tau_+(z,t,\overline{t}-[q^n \varepsilon])
\times
z^{-n}
\prod_{1\leq i<j\leq n} {(a_i-qa_j)(a_i-q^{-1}a_j)\over (a_i-a_j)^2}\\
&&\qquad \times
\prod_{k=1}^n 
{1-q^{n-1}\varepsilon/a_k\over 1-q^n\varepsilon/a_k}e^{
- \sum_{i=1}^\infty (1-q^i)t_ia_k^i -\sum_{i=1}^\infty (1-q^{-i})\overline{t}_ia_k^{-i} }\nn\\
&&=
\sum_{r=0}^n
z^{-r} \sum_{I\subset \{1,2,\cdots,n\}\atop |I|=r }
\prod_{\{i,j\}\subset I \atop i<j} {(a_i-a_j)^2\over (a_i-qa_j)(a_i-q^{-1}a_j)}\label{tau-minus}\\
&&\qquad \times
\prod_{k\in I} d_k (q^n \varepsilon)e^{-
 \sum_{i=1}^\infty (1-q^i)t_ia_k^i - \sum_{i=1}^\infty (1-q^{-i})\overline{t}_ia_k^{-i} }.
\nonumber
\end{eqnarray}
\end{definition}

\begin{proposition}\label{3HM}
We have
\begin{eqnarray}
&&\tau_-(z,t+[\alpha],\overline{t})\tau_+(z,t,\overline{t})\nonumber\\
&&
=(1-\alpha q^n \varepsilon ) 
\prod_{k=1}^n
 { (1-\alpha a_k)\over (1-\alpha q a_k)}
 \tau_-(z,t,\overline{t})\tau_+(z,t+[\alpha],\overline{t})\label{HM+-1}\\
 &&\qquad +\alpha \varepsilon \,
 \tau_-(z/q,t+[\alpha],\overline{t})\tau_+(zq,t,\overline{t}),\nonumber\\
 &&\tau_-(z/q,t,\overline{t}+[\beta])\tau_+(z,t,\overline{t})\nonumber\\
&&
=(1-\beta/q^{n} \varepsilon) 
\prod_{k=1}^n
 { (1-\beta/a_k )\over (1-\beta/q a_k)}
 \tau_-(z/q,t,\overline{t})\tau_+(z,t,\overline{t}+[\beta])\label{HM+-2}\\
 &&\qquad +(\beta/ \varepsilon) \,
 \tau_-(z,t,\overline{t}+[\beta])\tau_+(z/q,t,\overline{t}),\nonumber\\
 &&\tau_\pm(z,t+[\alpha],\bar{t})\tau_\pm(z,t,\bar{t}+[\beta]) \label{HM3}\\
  &&=(1-\alpha\beta)\tau_\pm(z,t,\bar{t})\tau_\pm(z,t+[\alpha],\bar{t}+[\beta])\nn\\
  &&
\qquad +~\alpha \beta
\tau_\pm(z/q,t+[\alpha],\bar{t})\tau_\pm(zq,t,\bar{t}+[\beta])=0.\nn
\end{eqnarray}
\end{proposition}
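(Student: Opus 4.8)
The plan is to deduce all three families of bilinear relations from the single Hirota-Miwa equation (\ref{HM}) for the $n$-soliton tau-function $\tau(s,t,\overline{t})$ under the specialization $\lambda_k=a_k$, $\mu_k=qa_k$, $z=q^{-s}$, combined with the defining relation of $\tau_-$. The essential bookkeeping device is to read the Definition of $\tau_-$ as $\tau_-(z,t,\overline{t})=\tau_+(z,t,\overline{t}-[q^n\varepsilon])\,P(z,t,\overline{t})$, where $P(z,t,\overline{t})=z^{-n}C\exp\bigl(-\sum_{i\geq1}(1-q^i)t_i\sum_k a_k^i-\sum_{i\geq1}(1-q^{-i})\overline{t}_i\sum_k a_k^{-i}\bigr)$ is the explicit gauge prefactor ($C$ constant in $z,t,\overline{t}$). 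Two elementary identities then drive the whole computation: $\Phi(\alpha):=\prod_k\frac{1-\alpha a_k}{1-\alpha qa_k}=\exp\bigl(-\sum_{i\geq1}\frac{\alpha^i}{i}(1-q^i)\sum_k a_k^i\bigr)$ and $\Psi(\beta):=\prod_k\frac{1-\beta/a_k}{1-\beta/qa_k}=\exp\bigl(-\sum_{i\geq1}\frac{\beta^i}{i}(1-q^{-i})\sum_k a_k^{-i}\bigr)$. They say precisely that the Miwa shifts $t\to t+[\alpha]$ and $\overline{t}\to\overline{t}+[\beta]$ multiply $P$ by $\Phi(\alpha)$ and $\Psi(\beta)$ respectively, so that the product coefficients appearing in (\ref{HM+-1}) and (\ref{HM+-2}) are exactly the gauge factors generated by the shifts.

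I would prove (\ref{HM3}) first. Since $z/q=q^{-(s+1)}$ and $zq=q^{-(s-1)}$, the upper-sign case is obtained by substituting $\tau_+(z/q,\cdot)=\tau(s+1,\cdot)$, $\tau_+(zq,\cdot)=\tau(s-1,\cdot)$ into (\ref{HM}) and transposing the middle term. For the lower sign I replace every $\tau_-$ by $\tau_+(\cdot,\overline{t}-[q^n\varepsilon])P$ and check that each of the three bilinear products carries the identical gauge factor: the powers of $z$ balance because $(z/q)^{-n}(zq)^{-n}=(z^{-n})^2$, and the exponential parts agree because each product contains $\Phi(\alpha)\Psi(\beta)$ exactly once. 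Thus $P$ cancels, and what remains is just (\ref{HM}) with $\overline{t}$ replaced by $\overline{t}-[q^n\varepsilon]$, which holds since (\ref{HM}) is valid for all $\overline{t}$.

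For the mixed equations I substitute the $\tau_-$-relation into each product and cancel the common gauge factor. In (\ref{HM+-1}) the shift $t\to t+[\alpha]$ generates $\Phi(\alpha)$, matching the stated product coefficient, and $(z/q)^{-n}=q^nz^{-n}$ turns the written coefficient $\alpha\varepsilon$ into $\alpha q^n\varepsilon$; after cancellation the relation is exactly (\ref{HM}) with $\beta=q^n\varepsilon$, $\overline{t}\to\overline{t}-[q^n\varepsilon]$, the Hirota-Miwa factor $1-\alpha\beta$ supplying the remaining $1-\alpha q^n\varepsilon$. The treatment of (\ref{HM+-2}) is the mirror image, with $\Psi(\beta)$ now produced by $\overline{t}\to\overline{t}+[\beta]$; after cancellation it reduces to
\begin{eqnarray*}
&&\tau(s+1,t,\overline{t}+[\beta]-[q^n\varepsilon])\,\tau(s,t,\overline{t})\\
&&\quad=(1-\beta/q^n\varepsilon)\,\tau(s+1,t,\overline{t}-[q^n\varepsilon])\,\tau(s,t,\overline{t}+[\beta])\\
&&\qquad+(\beta/q^n\varepsilon)\,\tau(s,t,\overline{t}+[\beta]-[q^n\varepsilon])\,\tau(s+1,t,\overline{t}).
\end{eqnarray*}

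The hard part will be this last identity. Unlike the others, it involves only $\overline{t}$-Miwa shifts and the neighbouring sites $s,s+1$, so it is \emph{not} a specialization of (\ref{HM}), which always carries a $t$-shift. The clean resolution is to invoke the companion, $\overline{t}$-direction Hirota-Miwa equation of the 2D Toda hierarchy \cite{UT}, obtained from the Ueno-Takasaki bilinear (residue) identity by localizing the contour in the $\overline{t}$-variable instead of the $t$-variable; it holds for every tau-function and specializes to the displayed relation. If one wishes to stay within the present material, the same identity can instead be verified directly on (\ref{tau-plus}): expanding in powers of $z$ and in the soliton phases reduces it to a summation identity over the subsets $I\subset\{1,\dots,n\}$ for the Cauchy-type weights $\prod\frac{(a_i-a_j)^2}{(a_i-qa_j)(a_i-q^{-1}a_j)}$, of exactly the kind already used in Lemma \ref{tau+totau-}. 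Everything else is the routine but error-prone bookkeeping of the prefactor $P$, for which the two identities $\Phi,\Psi$ are the only real input.
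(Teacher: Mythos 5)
Your reduction of (\ref{HM+-1}) and (\ref{HM3}) to the single Hirota--Miwa equation (\ref{HM}) via the gauge prefactor $P$ (equivalently, via Lemma \ref{tau+totau-} and the definition of $\tau_-$) is exactly the paper's argument, and your bookkeeping with $\Phi(\alpha)$, $\Psi(\beta)$ and the $z^{-n}$ factor is correct. The one place where you part company with the paper is the step you rightly single out as the hard part of (\ref{HM+-2}): the residual bilinear identity with two $\overline{t}$-Miwa shifts and no $t$-shift. You resolve it either by importing the companion $\overline{t}$-direction Hirota--Miwa equation from the general Ueno--Takasaki theory or by a direct combinatorial verification on the soliton sum; both would work, but the paper has a lighter device that keeps everything inside (\ref{HM}). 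For the specialized $n$-soliton tau function one has the exact symmetry $\tau(s,t-[\alpha],\overline{t})=\tau(s+1,t,\overline{t}-[\alpha^{-1}])$ (each phase factor picks up $\prod_{k\in I}(1-\alpha\lambda_k)/(1-\alpha\mu_k)$ under either operation). Substituting $t\to t-[\alpha]$ in (\ref{HM}), applying this identity to the shifted factors, and then setting $\alpha=(q^{n}\varepsilon)^{-1}$ produces precisely your displayed relation
\begin{eqnarray*}
&&\tau(s+1,t,\overline{t}+[\beta]-[q^n\varepsilon])\,\tau(s,t,\overline{t})
=(1-\beta/q^n\varepsilon)\,\tau(s+1,t,\overline{t}-[q^n\varepsilon])\,\tau(s,t,\overline{t}+[\beta])\\
&&\qquad\qquad+\,(\beta/q^n\varepsilon)\,\tau(s,t,\overline{t}+[\beta]-[q^n\varepsilon])\,\tau(s+1,t,\overline{t}),
\end{eqnarray*}
so the ``companion'' equation is in fact a specialization of (\ref{HM}) after all, once this symmetry of the soliton solution is used. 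Your route is therefore correct but heavier at this step (it either goes outside the stated material or requires a new summation identity over subsets $I$), whereas the paper's route is self-contained; incorporating the shift identity above would close the only real gap in your write-up.
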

\begin{proof}
Eq. (\ref{HM+-1}) follows from the Hirota-Miwa equation (\ref{HM}), (\ref{tau-plus}), (\ref{tau-minus}) and 
Lemma \ref{tau+totau-}.
Noting that we have $\tau(s,t-[\alpha],\overline{t})=\tau(s+1,t,\overline{t}-[\alpha^{-1}])$,
we have (\ref{HM+-2}) in the same way. 
Eq. (\ref{HM3}) follows from Eq. (\ref{HM}).
\end{proof}
\begin{remark}
Note taht we may write
\begin{eqnarray}
&&(1-\alpha q^n \varepsilon ) 
\prod_{k=1}^n
 { (1-\alpha a_k )\over (1-\alpha q a_k)}=
 \exp\left(-\sum_{i=1}^\infty M_i   \alpha^i \right),\\
 &&(1-\beta/ q^n \varepsilon ) 
\prod_{k=1}^n
 { (1-\beta/ a_k )\over (1-\beta/ q a_k)}=
 \exp\left(-\sum_{i=1}^\infty \overline{M}_i   \beta^i \right),
 \\
 &&M_i={1-q^i\over i}\left(a_1^i+\cdots +a_n^i+q^{ni}\varepsilon^i+
 q^{(n+1)i}\varepsilon^i+ q^{(n+2)i}\varepsilon^i+\cdots \right), \label{eigenM}\\
 &&\overline{M}_i={1-q^{-i}\over i}\left(a_1^{-i}+\cdots +a_n^{-i}+q^{-ni}\varepsilon^{-i}+
 q^{-(n+1)i}\varepsilon^{-i}+\cdots \right). \label{eigenMbar}
\end{eqnarray}
\end{remark}
\begin{proposition}
By expanding (\ref{HM+-1}) in $\alpha$, we have 
the Hirota equations
\begin{eqnarray}
&&(D_{t_1}+M_1)\tau_-(z)\cdot \tau_+(z)=
\varepsilon \tau_-(z/q)\cdot \tau_+(zq),\label{To-1}\\
&&
(D_{t_2}+2M_2)\tau_-(z)\cdot \tau_+(z)=
\varepsilon (D_{t_1}+M_1) \tau_-(z/q)\cdot \tau_+(zq),\label{To-2}\\
&&(D_{t_3}+3M_3)\tau_-(z)\cdot \tau_+(z)
+
{1\over 8}(D_{t_1}+M_1)^3\tau_-(z)\cdot \tau_+(z)\label{To-3}\\
&&\qquad=
{3\over 4}
\varepsilon (D_{t_2}+2M_2) \tau_-(z/q)\cdot \tau_+(zq)+
{3\over 8}
\varepsilon (D_{t_1}+M_1)^2 \tau_-(z/q)\cdot \tau_+(zq),\nn
\end{eqnarray}
and so on. Here $M_i$'s are defined in (\ref{eigenM}).
\end{proposition}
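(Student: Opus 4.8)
The plan is to turn the one-sided Miwa shift $t\mapsto t+[\alpha]$ in (\ref{HM+-1}) into Hirota bilinear form by splitting it into two symmetric half-shifts. I would start from the elementary identity $\left[\exp\left(\sum_i c_i D_{t_i}\right)f\cdot g\right](t)=f(t+c)g(t-c)$ for the Hirota operators, and apply it with $c=\frac12[\alpha]$ to every bilinear product in (\ref{HM+-1}), using $[\alpha]_i=\alpha^i/i$. This rewrites
\[
\tau_-(t+[\alpha])\,\tau_+(t)=\Bigl[e^{\frac12 D(\alpha)}\tau_-\cdot\tau_+\Bigr]_{t+\frac12[\alpha]},\qquad D(\alpha):=\sum_{i\ge1}\frac{\alpha^i}{i}D_{t_i},
\]
and likewise $\tau_-(t)\tau_+(t+[\alpha])=\bigl[e^{-\frac12 D(\alpha)}\tau_-\cdot\tau_+\bigr]_{t+\frac12[\alpha]}$ and $\tau_-(z/q,t+[\alpha])\tau_+(zq,t)=\bigl[e^{\frac12 D(\alpha)}\tau_-(z/q)\cdot\tau_+(zq)\bigr]_{t+\frac12[\alpha]}$ (here $\overline{t}$ is a fixed spectator). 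The point to stress is that all three products land on the \emph{same} shifted time $t+\frac12[\alpha]$; since (\ref{HM+-1}) holds for every $t$, this common evaluation shift may simply be discarded, leaving an identity valid for generic $t$.

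Next I would absorb the scalar prefactor. By the remark, $(1-\alpha q^n\varepsilon)\prod_k\frac{1-\alpha a_k}{1-\alpha q a_k}=e^{-M(\alpha)}$ with $M(\alpha):=\sum_i M_i\alpha^i$. Multiplying the resulting identity through by $e^{\frac12 M(\alpha)}$ and using that $M(\alpha)$ is a scalar that commutes with the Hirota operators, the exponentials combine, since $\frac12 D(\alpha)+\frac12 M(\alpha)=\frac12\sum_i\frac{\alpha^i}{i}\bigl(D_{t_i}+iM_i\bigr)=:\frac12\tilde D(\alpha)$. This collapses the three-term equation into the single generating identity
\[
2\sinh\!\Bigl(\frac12\tilde D(\alpha)\Bigr)\,\tau_-\cdot\tau_+=\alpha\varepsilon\,e^{\frac12\tilde D(\alpha)}\,\tau_-(z/q)\cdot\tau_+(zq),
\]
where $\tilde D(\alpha)=\sum_i\frac{\alpha^i}{i}\bigl(D_{t_i}+iM_i\bigr)$, whose coefficients are exactly the operators $D_{t_1}+M_1,\ D_{t_2}+2M_2,\ D_{t_3}+3M_3,\dots$ appearing in the statement.

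Finally I would expand both sides in powers of $\alpha$ and match coefficients. Using $2\sinh(\frac12\tilde D)=\tilde D+\frac{1}{24}\tilde D^3+\cdots$ and $e^{\frac12\tilde D}=1+\frac12\tilde D+\frac18\tilde D^2+\cdots$, together with $\tilde D(\alpha)=\alpha(D_{t_1}+M_1)+\frac{\alpha^2}{2}(D_{t_2}+2M_2)+\frac{\alpha^3}{3}(D_{t_3}+3M_3)+\cdots$, the coefficient of $\alpha^1$ gives (\ref{To-1}) directly, while the coefficients of $\alpha^2$ and $\alpha^3$ give (\ref{To-2}) and (\ref{To-3}) after clearing the overall factor $1/i$ carried by the $i$-th Miwa weight (the cross term $\frac{1}{24}\tilde D^3$ against the $\frac13\tilde D$ piece is what produces the $\frac18$, $\frac34$, $\frac38$ in (\ref{To-3})). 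Higher orders reproduce the remaining equations indicated by ``and so on.''

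The step I expect to be the main obstacle is the first one: arranging the half-shift conversion so that all three bilinear terms are evaluated at the \emph{identical} point $t+\frac12[\alpha]$, which is precisely what legitimizes dropping the shift and recasting (\ref{HM+-1}) as a clean Hirota identity. Everything downstream is a controlled expansion, but the bookkeeping of the $\sinh$ and $\exp$ series against the weights $\alpha^i/i$ must be tracked carefully to land exactly on the stated numerical coefficients.
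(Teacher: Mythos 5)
Your proposal is correct and is essentially the paper's own (unwritten) argument: the proposition is stated as following ``by expanding (\ref{HM+-1}) in $\alpha$,'' and your half-shift conversion $f(t+[\alpha])g(t)=\bigl[e^{\frac12 D(\alpha)}f\cdot g\bigr]_{t+\frac12[\alpha]}$ together with the absorption of the scalar prefactor $e^{-M(\alpha)}$ into $\tilde D(\alpha)=\sum_i\frac{\alpha^i}{i}(D_{t_i}+iM_i)$ is the standard way to carry that expansion out. I checked the coefficient bookkeeping at orders $\alpha^1,\alpha^2,\alpha^3$ of your generating identity $2\sinh(\frac12\tilde D(\alpha))\,\tau_-\cdot\tau_+=\alpha\varepsilon\,e^{\frac12\tilde D(\alpha)}\tau_-(z/q)\cdot\tau_+(zq)$ and it reproduces (\ref{To-1})--(\ref{To-3}) exactly, including the factors $\frac18$, $\frac34$, $\frac38$.
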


Thus we found that Eq. (\ref{Hirota-t}) coincide with 
Eq. (\ref{To-1}) under the identification $t=t_1,\overline{t}=\overline{t}_1$.
Eqs. (\ref{Hirota-tb}) and  (\ref{Toda}) coincide with the 
first nontrivial equation from Eqs. (\ref{HM+-2}) and (\ref{HM3}) respectively.

In the next section, we will check that Eqs. (\ref{To-2}) and Eq. (\ref{To-3})
also agree with the equations derived from the Poisson structure.

\section{Poisson algebra and 2D Toda hierarchy}
\subsection{elementary and power sum symmetric functions}
We need some facts about the symmetric functions\cite{Mac}. 
Let $x=(x_1,x_2,\cdots)$ be an infinite set of independent indeterminates.
Let $e_n(x)$ be the $n$-th elementary symmetric function, and $p_n(x)$
be the $n$-th power sum function.
The generating functions for them are given by
$E(y)=\sum_{n=0}^\infty e_n(x)y^n=\prod_{i=1}^\infty (1+x_i y)$,
and $P(y)=\sum_{n=1}^\infty {1\over n}p_n(x)y^n=-\log E(-y)$.
Solving the equation $P'(y)=-E'(-y)/E(-y)$, we have 
\begin{eqnarray}
p_n=\left| 
\begin{array}{lllll} 
e_1&1&0&\cdots&0\\
2e_2&e_1&1&\cdots&\\
3e_3&e_2&e_1&\cdots&\\
\vdots&&&\ddots&\vdots\\
ne_n&e_{n-1}&e_{n-2}&\cdots&e_1
\end{array}\right|.
\end{eqnarray}

\subsection{Integrals of motion from Ding-Iohara algebra}
First we introduce some notations.
We denote the constant term $f_0$ of a series 
$f(z)=\sum_{n\in {\bf Z}} f_n z^n$  by $[f(z)]_1$.
We also use the same symbol for the case of 
a series with several variables. 
For examples, 
by $[f(z_1,z_2)]_1$ we denote the constant term $f_{0,0}$ of the  series 
$f(z_1,z_2)=\sum_{n_1,n_2\in {\bf Z}} f_{n_1,n_2} z_1^{n_1}z_2^{n_2}$.

\begin{definition}
Define the integrals of motion by 
\begin{eqnarray}
&&I_k=\left[ \prod_{1\leq i<j\leq  k}{1-w_j/w_i\over 1-qw_j/w_i} \eta(w_1)\eta(w_2)\cdots \eta(w_k)\right]_1,\label{I_k}\\
&&\overline{I}_k=\left[ \prod_{1\leq i<j\leq  k}{1-w_j/w_i\over 1-q^{-1}w_j/w_i} \xi(w_1)\xi(w_2)\cdots \xi(w_k)\right]_1,\label{Ibar_k}
\end{eqnarray}
where the rational factors in $w_i$'s should be understood in 
the sense of the series as
$(1-w_j/w_i)/( 1-q^{\pm 1}w_j/w_i)=1+(1-q^{\mp 1})\sum_{n>0} (q^{\pm 1}w_j/w_i)^n$.
\end{definition}

For example, 
we have  $I_1=\eta_0$ and $I_2=\eta_0^2+(1-q^{-1})\sum_{n>0} q^n \eta_{-n}\eta_{n}$, 
and so on. Based on the argument given in Ref. \refcite{FHHSY}
with considering the classical limit ($t\rightarrow 1$), 
one can prove the following.
\begin{proposition}
We have the commutativity
$ \{I_k,I_l\}=0$, $ \{\overline{I}_k,\overline{I}_l\}=0$
and $ \{I_k,\overline{I}_l\}=0$.
\end{proposition}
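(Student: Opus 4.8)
The plan is to establish the three Poisson-commutativity statements, $\{I_k,I_l\}=0$, $\{\overline{I}_k,\overline{I}_l\}=0$, and $\{I_k,\overline{I}_l\}=0$, by transcribing the quantum proof of Ref.~\refcite{FHHSY} into the classical limit $t\to 1$. Since the $\alpha_n$, $\eta(z)$, $\xi(z)$ of this paper are the $t\to 1$ limit of the level-one Ding--Iohara generators, the quantum commutativity $[\mathcal{I}_k,\mathcal{I}_l]=0$ in that reference degenerates, under $[\,\cdot\,,\cdot\,]/(t-1)\to\{\,\cdot\,,\cdot\,\}$, into the desired Poisson commutativity. First I would make this limit precise: the defining brackets (\ref{alpha}) are exactly the classical shadow of the Ding--Iohara Heisenberg relations, and the kernels $(1-w_j/w_i)/(1-q^{\pm1}w_j/w_i)$ appearing in (\ref{I_k}) and (\ref{Ibar_k}) are the $t\to1$ specializations of the quantum structure functions, so the integrand of each $I_k$ is the classical limit of the corresponding quantum current product.

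The direct route, which I would carry out for the reader's benefit even though the limiting argument suffices in principle, is to prove $\{I_k,I_l\}=0$ self-containedly from Lemma~\ref{Poisson-Lemma}. The essential input is the exchange relation (\ref{eta-eta}),
\begin{eqnarray*}
\{\eta(z),\eta(w)\}=\eta(z)\eta(w)\,G(w/z),\qquad
G(u)=\sum_{l\neq0}{\rm sgn}(l)(1-q^{|l|})u^l,
\end{eqnarray*}
together with the observation that the rational prefactor
$\Phi_k(w)=\prod_{i<j}(1-w_j/w_i)/(1-qw_j/w_i)$ is engineered so that the product
$\Phi_k(w)\,\eta(w_1)\cdots\eta(w_k)$ behaves like a symmetrized current: $G(u)$ is related to the logarithmic derivative of the structure factor, so that $\{\eta(w_a),\eta(w_b)\}\Phi_k$ can be rewritten as a total difference in the variables $w_a,w_b$. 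The plan is then to compute $\{I_k,I_l\}$ by applying the Leibniz rule to the two constant-term expressions, using (\ref{eta-eta}) on each cross pair $\{\eta(w_a),\eta(v_b)\}$, and to show that after multiplying by the combined kernel $\Phi_k(w)\Phi_l(v)$ every term becomes a total $q$-difference (equivalently, acquires a factor $1-\delta(qv_b/w_a)$-type telescoping structure) whose constant term vanishes. The case $\{\overline{I}_k,\overline{I}_l\}=0$ is identical after replacing $q\mapsto q^{-1}$ and $\eta\mapsto\xi$, using (\ref{xi-xi}) in place of (\ref{eta-eta}).

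The mixed statement $\{I_k,\overline{I}_l\}=0$ is structurally different and is the step I expect to be the main obstacle. Here the relevant bracket is (\ref{eta-xi}), which is \emph{not} of the multiplicative exchange form but instead produces delta functions $\delta(q^{\pm1/2}w/z)$ weighted by the tau-function ratios $\tau_\pm(zq)\tau_\pm(z/q)/\tau_\pm(z)^2$. Consequently $\{I_k,\overline{I}_l\}$ collapses each mixed pairing $\{\eta(w_a),\xi(v_b)\}$ onto the residue locus $v_b=q^{\mp1/2}w_a$, and one is left with a sum of two residue contributions (from the two delta terms) that must cancel against each other after the constant term is extracted. The hard part will be controlling the interaction between these residue substitutions and the rational kernels $\Phi_k(w)$, $\overline{\Phi}_l(v)$: one must verify that the poles of the kernels at $w_i=qw_j$ and $v_i=q^{-1}v_j$ do not collide with the substitution locus $v_b=q^{\pm1/2}w_a$, and that the two delta contributions combine, using the functional symmetry $\tau_+(zq)\tau_+(z/q)/\tau_+(z)^2 \leftrightarrow \tau_-(zq)\tau_-(z/q)/\tau_-(z)^2$ under $z\mapsto z/q$, into a total difference with vanishing constant term. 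The cleanest way to organize this, and the organizing principle of Ref.~\refcite{FHHSY}, is to package each current product as a vertex-operator-type expression so that the delta-function cancellation becomes manifest from the half-integer $q^{1/2}$-shift symmetry built into the definition (\ref{xi}) of $\xi(z)$; I would invoke that reference for the combinatorial bookkeeping and supply only the classical limit of the key cancellation identity.
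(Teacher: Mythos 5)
Your primary argument---deducing the Poisson commutativity as the classical ($t\to 1$) limit of the exact quantum commutativity of the Ding--Iohara integrals of motion established in Ref.~\refcite{FHHSY}---is precisely the paper's own proof, which consists of exactly that one-line reduction. The additional ``direct route'' you sketch goes beyond what the paper attempts and remains schematic (you yourself defer the combinatorial bookkeeping back to Ref.~\refcite{FHHSY}), but since the limiting argument already suffices, the proposal is correct and follows the same approach.
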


\subsection{Integrals of motion associated with $t$ and $\overline{t}$}
Some explicit calculations show us that the 
integrals $I_k$ and $\overline{I}_k$ does not correspond to the 
Toda times $t_1,t_2,\cdots$ and $\overline{t}_1,\overline{t}_2,\cdots$ 
in general. Hence our task is 
to find a suitable set of integrals, which we call  $M_k$ and $\overline{M}_k$.

At present, unfortunately, it is not easy to do the task purely within the 
framework of Poisson algebra.
However, with the knowledge of the
values of the integrals $I_k$ and $\overline{I}_k$ on 
the $n$-soliton solution  (\ref{tau-plus}), (\ref{tau-minus}),
we can guess the correct formula.

\begin{conjecture}\label{con1}
Let $\tau_\pm(z,t,\overline{t})$ be as in (\ref{tau-plus}) and (\ref{tau-minus}).
The quantities $I_k$'s and $\overline{I}_k$'s are independent of 
$t$ and $\overline{t}$. The values are 
given by the follwing specialization of 
the elementary symmetric functions as
\begin{eqnarray}
&&I_k=q^{-k(k-1)/2}(1-q)(1-q^2)\cdots(1-q^k)\label{I_k}\\
&&\qquad \times e_k(a_1,\cdots,a_n,q^n \varepsilon,q^{n+1} \varepsilon,\cdots),\nn\\
&&\overline{I}_k=q^{k(k-1)/2}(1-q^{-1})(1-q^{-2})\cdots(1-q^{-k})\\
&&\qquad \times e_k(a_1^{-1},\cdots,a_n^{-1},q^{-n} \varepsilon^{-1},q^{-n-1} \varepsilon^{-1},\cdots).\nn
\end{eqnarray}
\end{conjecture}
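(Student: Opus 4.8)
The plan is to compute the integrals $I_k$ and $\overline{I}_k$ directly on the $n$-soliton solution and match the result against the proposed specialisation of elementary symmetric functions. The first step is to eliminate the abstract generators $\alpha_n$ in favour of the explicit tau-functions. In the classical limit the $\alpha_n$ are commuting coordinates, so the product $\eta(w_1)\cdots\eta(w_k)$ carries no ordering ambiguity, and by (\ref{eta}) one has, on the soliton background,
\begin{eqnarray*}
\eta(w)=\varepsilon\,\frac{\tau_-(w/q,t,\overline{t})\,\tau_+(wq,t,\overline{t})}{\tau_-(w,t,\overline{t})\,\tau_+(w,t,\overline{t})},
\end{eqnarray*}
with $\tau_\pm$ given by (\ref{tau-plus}) and (\ref{tau-minus}). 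Substituting this into the definition of $I_k$ turns the problem into the evaluation of a single $k$-fold constant term
\begin{eqnarray*}
I_k=\varepsilon^k\left[\prod_{1\leq i<j\leq k}\frac{1-w_j/w_i}{1-qw_j/w_i}\prod_{l=1}^k\frac{\tau_-(w_l/q)\tau_+(w_l q)}{\tau_-(w_l)\tau_+(w_l)}\right]_1,
\end{eqnarray*}
with the rational factors expanded in the prescribed region $|w_1|>\cdots>|w_k|$. The analogous reduction for $\overline{I}_k$ uses (\ref{xi}).

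The heart of the argument is the iterated-residue evaluation of this constant term. First I would record the pole--zero structure of the one-variable kernel $\tau_-(w/q)\tau_+(wq)/(\tau_-(w)\tau_+(w))$: since $\tau_+(\,\cdot\,)$ is a degree-$n$ polynomial and $\tau_-(\,\cdot\,)$ a degree-$n$ polynomial in the inverse variable, the denominators contribute $n$ poles each, while Lemma~\ref{tau+totau-} gives the precise relation between $\tau_+$ and $\tau_-$ needed to track the cancellations. Extracting the constant term by iterated contour integration, the residues localise at the zeros of the denominators; I expect the genuinely soliton-dependent poles to contribute the variables $a_1,\dots,a_n$, while the behaviour at the origin and at infinity (where $\tau_\pm\to 1$ or to its leading coefficient) produces, via a geometric series in $q$, the infinite tail $q^n\varepsilon,q^{n+1}\varepsilon,\dots$. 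Summing over all size-$r$ subsets $I\subset\{1,\dots,n\}$ --- exactly the combinatorial sum already visible in (\ref{tau-plus}) --- should reassemble the terms into $e_k(a_1,\dots,a_n,q^n\varepsilon,q^{n+1}\varepsilon,\dots)$, with the prefactor $q^{-k(k-1)/2}(1-q)\cdots(1-q^k)$ coming from the $q$-powers generated by the kernel $\prod_{i<j}(1-w_j/w_i)/(1-qw_j/w_i)$. It is cleanest to package all $k$ at once as a generating-function identity, proving that $\sum_k I_k\,y^k/\big(q^{-k(k-1)/2}(1-q)\cdots(1-q^k)\big)$ equals $\prod_{l=1}^n(1+a_l y)\prod_{j\geq 0}(1+q^{n+j}\varepsilon y)$; the parallel computation with $q\mapsto q^{-1}$, $a_k\mapsto a_k^{-1}$, $\varepsilon\mapsto\varepsilon^{-1}$ then yields the formula for $\overline{I}_k$.

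For the asserted independence of $t$ and $\overline{t}$, I would argue separately for the first times and the higher ones. Since $I_1=\eta_0$ generates the flow $\partial_t=\{\,\cdot\,,\eta_0\}$ and $\overline{I}_1=\xi_0$ generates $\partial_{\overline{t}}=\{\,\cdot\,,\xi_0\}$, the already-established commutativity $\{I_k,I_1\}=\{I_k,\overline{I}_1\}=0$ gives $\partial_{t_1}I_k=\partial_{\overline{t}_1}I_k=0$ at once. For the higher times one uses the bridge provided by the material on elementary and power sum symmetric functions (Newton's determinant formula): the remark following Proposition~\ref{3HM} exhibits the $M_i$ of (\ref{eigenM}) as the power sums of the same specialisation whose elementary symmetric functions are the $I_k$, so each $I_k$ is a polynomial in $M_1,\dots,M_k$. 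Granting that the Toda time $t_i$ is generated by a Hamiltonian lying in the commutative family spanned by the $M$'s (equivalently the $I$'s), mutual Poisson commutativity forces $\partial_{t_i}I_k=0$, and symmetrically for $\overline{t}_i$.

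The main obstacle is the multidimensional constant-term bookkeeping. Two features make it delicate: the appearance of an \emph{infinite} family of effective variables $q^{n+j}\varepsilon$, which must be produced from a boundary contribution (a residue at $0$ or $\infty$, or the $\overline{t}$-shift $-[q^n\varepsilon]$ and the factor $d_k(q^n\varepsilon)$ built into the definition of $\tau_-$ through Lemma~\ref{tau+totau-}) rather than from an honest pole at a soliton parameter; and the need to show that the cross-terms coupling different $w_l$ through $\prod_{i<j}(1-w_j/w_i)/(1-qw_j/w_i)$ conspire, after the subset summation, to give precisely $e_k$ and the clean $q$-power prefactor, with no residual dependence on the exponential time factors in (\ref{tau-plus}) and (\ref{tau-minus}). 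Controlling these two points --- ideally by recognising the whole computation as the $t\rightarrow 1$ classical limit of the diagonalisation of the Macdonald integrals of motion on a suitable state, as in Ref.~\refcite{FHHSY} --- is where the real work lies.
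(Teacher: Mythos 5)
You are trying to prove a statement that the paper itself leaves as a conjecture: the authors state explicitly that they cannot establish it within the Poisson-algebra framework, that they only \emph{guessed} the formula from the values on low-order cases, and they defer the $I_k$ part to Ref.~\refcite{TS}. So there is no proof in the paper to compare against, and the question is whether your proposal closes the gap on its own. It does not. The central step --- the $k$-fold constant-term / iterated-residue evaluation that is supposed to produce $e_k(a_1,\dots,a_n,q^n\varepsilon,q^{n+1}\varepsilon,\dots)$ with the prefactor $q^{-k(k-1)/2}(1-q)\cdots(1-q^k)$ --- is described only in the conditional (``I expect'', ``should reassemble''), and you yourself flag the two delicate points (the origin of the infinite tail $q^{n+j}\varepsilon$ from a boundary contribution, and the cancellation of the cross-terms coming from the kernel $\prod_{i<j}(1-w_j/w_i)/(1-qw_j/w_i)$) as ``where the real work lies''. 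A strategy that defers exactly the hard part cannot count as a proof of a statement the authors themselves could not prove; at best it is a plausible route, and the generating-function packaging you suggest is a sensible way to organize the computation if it were carried out.

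There is also a concrete circularity in your argument for the independence of $t$ and $\overline{t}$. The claim concerns the \emph{explicit} dependence on $t_2,t_3,\dots$ entering the soliton formulas (\ref{tau-plus}) and (\ref{tau-minus}) through the exponential factors. For $t_1$ and $\overline{t}_1$ your appeal to $\{I_k,I_1\}=\{I_k,\overline{I}_1\}=0$ is legitimate, because the paper identifies the Hamiltonian flows of $\eta_0$ and $\xi_0$ with these first times. But for the higher times, converting $t_i$-independence into the commutativity $\{I_k,M_i\}=0$ requires knowing that the flow $\{M_i,\cdot\}$ on the soliton background coincides with $\partial/\partial t_i$ acting on (\ref{tau-plus}) and (\ref{tau-minus}); that identification is precisely the content of the paper's Conjecture~\ref{maincon}, which is itself only checked for $i\le 3$ and in any case presupposes the values (\ref{eigenM}), i.e.\ essentially Conjecture~\ref{con1}. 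To make this part non-circular you would have to verify the higher-time independence directly on the explicit formulas --- which is again the computation you have deferred.
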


As for the statement about $I_k$ in Eq. (\ref{I_k}), see Ref. \refcite{TS}.

\begin{remark}
For small $k$, we have
\begin{eqnarray*}
&&I_1=(1-q)(a_1+\cdots+a_n)+q^n \varepsilon,\\
&&I_2=q^{-1}(1-q)(1-q^2)(a_1a_2+a_1 a_3+\cdots+a_{n-1}a_n)\\
&&\qquad +q^{n-1}(1-q^2)(a_1+\cdots+a_n) \varepsilon+q^{2n} \varepsilon^2,
\end{eqnarray*}
and so on.
\end{remark}

\begin{definition}\label{MMbar}
Set $I'_k=q^{k(k-1)/2}((1-q)(1-q^2)\cdots (1-q^k))^{-1}I_k$ 
and $\overline{I}'_k=q^{-k(k-1)/2}((1-q^{-1})(1-q^{-2})\cdots (1-q^{-k}))^{-1}\overline{I}_k$ 
for simplicity of display.
Define
\begin{eqnarray}
&&M_k={1-q^k\over k}
\left| 
\begin{array}{lllll} 
I'_1&1&0&\cdots&0\\
2I'_2&I'_1&1&\cdots&\\
3I'_3&I'_2&I'_1&\cdots&\\
\vdots&&&\ddots&\vdots\\
kI'_k&I'_{k-1}&I'_{k-2}&\cdots&I'_1
\end{array}\right|,\\
&& 
\overline{M}_k={1-q^{-k}\over k}
\left| 
\begin{array}{lllll} 
\overline{I}'_1&1&0&\cdots&0\\
2\overline{I}'_2&\overline{I}'_1&1&\cdots&\\
3\overline{I}'_3&\overline{I}'_2&\overline{I}'_1&\cdots&\\
\vdots&&&\ddots&\vdots\\
k\overline{I}'_k&\overline{I}'_{k-1}&\overline{I}'_{k-2}&\cdots&\overline{I}'_1
\end{array}\right|.
\end{eqnarray}
\end{definition}

\begin{remark}
Conjecture \ref{con1} implies that 
if $\tau_\pm(z,t,\overline{t})$ be as in (\ref{tau-plus}) and (\ref{tau-minus}),
$M_k$'s and $\overline{M}_k$'s are given by
\begin{eqnarray}
&&M_k={1-q^k\over k}p_k(a_1,\cdots,a_n,q^n \varepsilon,q^{n+1} \varepsilon,\cdots),\nn\\
&&\overline{M}_k={1-q^{-k}\over k}p_k(a_1^{-1},\cdots,a_n^{-1},q^{-n} \varepsilon^{-1},q^{-n-1} \varepsilon^{-1},\cdots).\nn
\end{eqnarray}
\end{remark}

\subsection{Formulas for $M_2$ and $M_3$}
Now we come back to our study of the 
Poisson algebra to check the  higher Hirota equations (\ref{To-2}) and (\ref{To-3}).

It is desirable to find some reasonably simple expressions
for $M_k$'s.
At present, however, we only have the following partial results.
\begin{lemma}
We have $M_1=[\eta(w)]_1$ and 
\begin{eqnarray}
&&M_2=\left[\left({1\over 2}+{q w_2/w_1\over 1-q w_2/w_1}\right)
\eta(w_1)\eta(w_2) \right]_1,\label{M_2}\\
&&M_3=\left[\left({1\over 3}+{q w_3/w_2\over (1-q w_2/w_1)(1-q w_3/w_2)}\right)
\eta(w_1)\eta(w_2)\eta(w_3) \right]_1.\label{M_3}
\end{eqnarray}
\end{lemma}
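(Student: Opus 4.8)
The plan is to work entirely inside the Poisson algebra, using only the determinant definition of $M_k$ in Definition \ref{MMbar} together with the constant-term definition (\ref{I_k}) of the $I_k$; in particular the soliton values of Conjecture \ref{con1} are not needed. First I would unfold the determinants. The array defining $M_k$ is exactly the $p_n$-determinant recorded in the symmetric-function subsection with $e_i$ replaced by $I'_i$, so that determinant equals $p_k(I'_\bullet)$ and $\frac{k}{1-q^k}M_k$ plays the role of $p_k$. Newton's identities then give
\begin{eqnarray*}
&& \frac{1}{1-q}M_1=I'_1,\qquad \frac{2}{1-q^2}M_2=(I'_1)^2-2I'_2,\\
&& \frac{3}{1-q^3}M_3=(I'_1)^3-3I'_1I'_2+3I'_3.
\end{eqnarray*}
Substituting $I'_j=q^{j(j-1)/2}\bigl(\prod_{i=1}^j(1-q^i)\bigr)^{-1}I_j$ and the definition (\ref{I_k}) of $I_j$ turns each right-hand side into a combination of constant terms of products of $\eta$'s.

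The key tool is that $\eta(w_1)\cdots\eta(w_k)$ is symmetric under permutations of the $w_i$ (the $\eta_n$ being commuting classical variables), so that under the functional $[\,\cdot\,]_1$ any kernel may be replaced by its $S_k$-symmetrization, each permuted term being read with the same series-expansion convention applied to its permuted arguments. Moreover $[\,\cdot\,]_1$ factorizes over disjoint blocks of variables, which lets me realize the mixed products as single constant terms over $k$ variables; for instance $I_1I_2=\bigl[\frac{1-w_3/w_2}{1-qw_3/w_2}\,\eta(w_1)\eta(w_2)\eta(w_3)\bigr]_1$ and $(I'_1)^3=(1-q)^{-3}[\eta(w_1)\eta(w_2)\eta(w_3)]_1$. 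Collecting the contributions over a common set of variables, $\frac{k}{1-q^k}M_k$ becomes $[\tilde K_k\,\eta(w_1)\cdots\eta(w_k)]_1$ for one explicit rational kernel $\tilde K_k$, which I then compare to the kernel asserted in the lemma.

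For $M_2$ this is short: writing $(I'_1)^2-2I'_2$ as a single constant term against $\eta(w_1)\eta(w_2)$, the scalar prefactors collapse (the coefficient of $\eta_0^2$ reduces to $(1-q^2)^{-1}$ and that of $\eta_{-m}\eta_m$ to $2(1-q^2)^{-1}q^m$), so that after multiplying by $\frac{1-q^2}{2}$ one reads off $\frac12\eta_0^2+\sum_{m>0}q^m\eta_{-m}\eta_m$, which is precisely $\bigl[\bigl(\frac12+\frac{qw_2/w_1}{1-qw_2/w_1}\bigr)\eta(w_1)\eta(w_2)\bigr]_1$. The same scheme gives $M_1=[\eta(w)]_1$ immediately.

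The case $M_3$ is where the real work lies, and I expect the main obstacle to be the resulting rational-function identity. Here $I_3$ carries the full triple product $\prod_{1\le i<j\le 3}\frac{1-w_j/w_i}{1-qw_j/w_i}$, and one must show that, modulo the symmetry of $\eta(w_1)\eta(w_2)\eta(w_3)$, the assembled kernel $\frac{3}{1-q^3}\tilde K_3$ agrees with $\frac13+\frac{qw_3/w_2}{(1-qw_2/w_1)(1-qw_3/w_2)}$. Concretely this means expanding the three factors of $I_3$ under the prescribed convention, combining them with the $(I'_1)^3$ and $I'_1I'_2$ terms, and reducing the $S_3$-symmetrized sum by partial fractions to the nested, telescoping form on the right. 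Once the symmetrized kernels are seen to coincide as formal Laurent series in $w_2/w_1$ and $w_3/w_2$, the formula for $M_3$ follows.
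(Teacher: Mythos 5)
The paper states this lemma without proof (it is introduced only as a ``partial result''), so there is nothing of the authors' to compare against; I can only assess your proposal on its own terms. Your route is the natural one and, as far as you carry it out, correct: the determinant in Definition \ref{MMbar} is exactly the Newton determinant expressing $p_k$ through $e_j$, so $\tfrac{k}{1-q^k}M_k$ equals $p_k$ evaluated at $e_j=I'_j$, giving $M_1=I_1=\eta_0$, $M_2=\tfrac{1-q^2}{2}\bigl((I'_1)^2-2I'_2\bigr)$ and $M_3=\tfrac{1-q^3}{3}\bigl((I'_1)^3-3I'_1I'_2+3I'_3\bigr)$. Your $M_1$ and $M_2$ computations are complete and correct: $(I'_1)^2-2I'_2=\tfrac{1}{1-q^2}\bigl(\eta_0^2+2\sum_{m>0}q^m\eta_{-m}\eta_m\bigr)$, which reproduces the kernel $\tfrac12+\tfrac{qw_2/w_1}{1-qw_2/w_1}$.

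The gap is in the $M_3$ case, which is where essentially all the content of the lemma lies: you name the kernel identity that must be verified but do not verify it, so for (\ref{M_3}) the proposal is a plan rather than a proof. Two concrete points. First, ``reducing the $S_3$-symmetrized sum by partial fractions'' is hazardous as stated: the permuted copies of a kernel are Laurent-expanded in different regions $|w_{\sigma(1)}|\gg|w_{\sigma(2)}|\gg|w_{\sigma(3)}|$, and expansions of one rational function in different regions differ by delta-function contributions (already $\sum_{n\ge0}x^n+\sum_{n\ge0}x^{-n}=1+\delta(x)$), so rational-function identities cannot be applied to the symmetrized sum naively. The safe way to close the argument is to compare coefficients of the monomials $\eta_a\eta_b\eta_c$ directly: the right-hand side of (\ref{M_3}) equals $\tfrac13\eta_0^3+\sum_{m\ge0,\,n\ge1}q^{m+n}\eta_{-m}\eta_{m-n}\eta_n$, and one checks this against the Newton combination coefficient by coefficient; for instance the coefficient of $\eta_0^3$ in $\tfrac{1-q^3}{3}\bigl((I'_1)^3-3I'_1I'_2+3I'_3\bigr)$ is $\tfrac{1-q^3}{3}\bigl(\tfrac{1}{(1-q)^3}-\tfrac{3q}{(1-q)^2(1-q^2)}+\tfrac{3q^3}{(1-q)(1-q^2)(1-q^3)}\bigr)=\tfrac13$, and the coefficient of $\eta_{-1}\eta_0\eta_1$ comes out to $q+q^2$ on both sides. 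Second, the heavy step you have deferred is precisely the expansion of the three-factor kernel $\prod_{i<j}\bigl(1+(1-q^{-1})\sum_{n>0}(qw_j/w_i)^n\bigr)$ defining $I_3$; until that bookkeeping is done and the resulting coefficients are matched for all index patterns (not just low-lying ones), the $M_3$ formula is not established.
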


\begin{remark}
It is an open problem to find a simple expression as above
for $M_4,M_5,\cdots$.
\end{remark}

\subsection{Main conjecture and equations with respect to $t_2$, $t_3$}

\begin{definition}\label{PBtime}
Set $\partial_{t_k}*=\{M_k,*\}$ and $\partial_{\overline{t}_k}*=\{*,\overline{M}_k\}$.
\end{definition}

Now we are ready to state our conjecture.
\begin{conjecture}\label{maincon}
Calculating $\partial_{t_k}$ and $\partial_{\overline{t}_k}$ by using
the Poisson brackets given in Definition \ref{PBtime},
we recover the same equation derived from 
the Hirota-Miwa equations (\ref{HM+-1}), (\ref{HM+-2}) and (\ref{HM3}).
\end{conjecture}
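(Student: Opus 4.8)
The plan is to reduce Conjecture \ref{maincon} to a single \emph{flow-matching} statement and then verify it order by order in the Toda times. Concretely, I would first show that the Hamiltonian vector fields $\partial_{t_k}=\{M_k,\cdot\}$ and $\partial_{\overline{t}_k}=\{\cdot,\overline{M}_k\}$ of Definition \ref{PBtime}, when applied to the tau-functions, coincide with the genuine derivatives $\partial/\partial t_k$ and $\partial/\partial\overline{t}_k$ of the explicit $n$-soliton tau-functions (\ref{tau-plus}) and (\ref{tau-minus}). Granting this, every bilinear identity of Proposition \ref{3HM}, which already holds for the explicit soliton functions as functions of the Toda times, translates verbatim into an identity among the Poisson-generated flows, and the conjecture follows. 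The logical role of the soliton solution here is only to fix the eigenvalues of the conserved quantities: by Conjecture \ref{con1} and the Remark after Definition \ref{MMbar}, $M_k$ evaluates to the power sum $\frac{1-q^k}{k}p_k(a_1,\dots,a_n,q^n\varepsilon,\dots)$, which is exactly the constant appearing as the eigenvalue shift in (\ref{eigenM}) and in (\ref{To-1})--(\ref{To-3}).

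The core computation is the evaluation of $\{M_k,\tau_\pm(z)\}$. Using the current representations of the Hamiltonians---$M_1=[\eta(w)]_1$ together with (\ref{M_2}) and (\ref{M_3})---and the brackets $\{\eta(w),\tau_\pm(z)\}$ of (\ref{eta-tau-}) and (\ref{eta-tau+}) in Lemma \ref{Poisson-Lemma}, one computes $\{M_k,\tau_\pm(z)\}$ by the Leibniz rule followed by a constant-term extraction. The outcome should take the form $\partial_{t_k}\tau_\pm(z)=J^{(k)}_\pm(z)\,\tau_\pm(z)$ for an order-$k$ current $J^{(k)}_\pm(z)$ whose zero mode reproduces $M_k$ and whose remaining modes produce the dynamical part of the Hirota derivative. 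Forming $D_{t_k}\tau_-(z)\cdot\tau_+(z)=(\partial_{t_k}\tau_-)\tau_+-\tau_-(\partial_{t_k}\tau_+)$ and, for the cubic case, the iterated brackets $\partial_{t_1}^m\tau_\pm=\{M_1,\{M_1,\cdots\}\}$ entering $(D_{t_1}+M_1)^3$, I would then match against the $\alpha$-expansions (\ref{To-2}) and (\ref{To-3}) extracted from (\ref{HM+-1}). The $\overline{t}_k$ flows are handled identically with $\xi(z)$, $\overline{M}_k$ and the brackets (\ref{xi-tau-}), (\ref{xi-tau+}) in place of the $\eta$-brackets, yielding the expansion of (\ref{HM+-2}); the shifted arguments $z\mapsto z/q,zq$ that appear are supplied by (\ref{eta}) and (\ref{xi}). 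The purely mixed equation (\ref{HM3}) is obtained by combining a $t$-flow with a $\overline{t}$-flow and invoking the commutativity $\{M_k,\overline{M}_l\}=0$, which guarantees that the two families may be treated as genuinely independent Toda times.

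The main obstacle is the absence of a usable closed form for $M_k$ when $k\geq 4$: as noted in the Remark following the Lemma, finding a simple constant-term kernel generalizing those in (\ref{M_2}) and (\ref{M_3}) is open, and without it the bracket $\{M_k,\tau_\pm(z)\}$ cannot be evaluated uniformly. For this reason I expect the argument to be feasible only up to a fixed low order---matching the phrase ``we check up to certain degree'' in the Introduction---so that a complete proof in full generality must await such a formula. A secondary, more conceptual difficulty is the faithful identification of iterated Poisson derivations with the multiplicatively structured Hirota derivatives appearing in (\ref{To-3}): one must verify that the combinatorics of repeated brackets $\partial_{t_1}^m$ against the eigenvalue shifts $M_1$ reassembles exactly into the binomial pattern of $(D_{t_1}+M_1)^3$, rather than merely agreeing numerically in the cases checked.
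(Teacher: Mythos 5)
Your proposal follows essentially the same route as the paper: the statement is left as a conjecture, and the evidence consists of evaluating $\{M_k,\tau_\pm(z)\}$ for $k\le 3$ via the kernels $M_1=[\eta(w)]_1$, (\ref{M_2}), (\ref{M_3}) and the brackets (\ref{eta-tau-})--(\ref{eta-tau+}), assembling the results into $(D_{t_k}+kM_k)\tau_-\cdot\tau_+$ and matching them against the $\alpha$-expansion of (\ref{HM+-1}), exactly as in the paper's Lemmas \ref{Lemma-1}--\ref{Lemma-4}. You also correctly identify the same obstruction the authors flag---the absence of a closed kernel for $M_k$, $k\ge 4$---as the reason the verification cannot at present be pushed to all orders.
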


The rest of the paper is devoted to give some evidence of 
our conjecture.

\begin{proposition}
Calculating $\partial_{t_2}$ and $\partial_{t_1}$ by using
the Poisson brackets given in Definition \ref{PBtime},
we recover the same equation as in Eq. (\ref{To-2}).
\end{proposition}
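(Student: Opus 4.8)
The plan is to unfold both Hirota derivatives in (\ref{To-2}) into the Poisson flows of Definition \ref{PBtime} and then to reduce the whole bilinear identity to a single scalar identity, quadratic in the modes $\eta_n$, after dividing by $\tau_-(z)\tau_+(z)$. Writing $D_{t_k}f\cdot g=(\partial_{t_k}f)g-f(\partial_{t_k}g)$, the organizing observation is that every flow acting on a tau-function is a multiplication operator: the brackets (\ref{eta-tau-}) and (\ref{eta-tau+}) are each proportional to $\tau_\pm(z)$, so that $\partial_{t_2}\tau_\pm(z)=\{M_2,\tau_\pm(z)\}=A_\pm(z)\tau_\pm(z)$ for suitable series $A_\pm(z)$. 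Hence the left-hand side of (\ref{To-2}), divided by $\tau_-(z)\tau_+(z)$, equals $A_-(z)-A_+(z)+2M_2$, and the proposition reduces to matching this with the analogous reduction of the right-hand side.

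First I would dispose of the right-hand side, which uses only the already-established $\partial_{t_1}=\partial_t$ flows. Applying (\ref{partial-tau}) to $\tau_-(z/q)$ and $\tau_+(zq)$ and then substituting (\ref{eta}) in the form $\varepsilon\,\tau_-(z/q)\tau_+(zq)=\eta(z)\tau_-(z)\tau_+(z)$, the right-hand side collapses to $\eta(z)\bigl(\eta_-(z/q)+\eta_+(zq)+\eta_0\bigr)\tau_-(z)\tau_+(z)$. Abbreviating $\tilde\eta(z)=\eta_-(z/q)+\eta_+(zq)+\eta_0$, the target identity becomes $A_-(z)-A_+(z)+2M_2=\eta(z)\tilde\eta(z)$.

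Next I would compute $A_\pm(z)$ from (\ref{M_2}) by the Leibniz rule, writing $M_2=\left[K(w_1,w_2)\eta(w_1)\eta(w_2)\right]_1$ with $K(w_1,w_2)=\frac12+\frac{qw_2/w_1}{1-qw_2/w_1}$. Each bracket (\ref{eta-tau-}), (\ref{eta-tau+}) contributes a summation factor $\sum_{n>0}(w_i/z)^n$ (for $\tau_-$) or $-\sum_{n>0}(z/w_i)^n$ (for $\tau_+$), so that $A_-(z)-A_+(z)$ is the double constant term of $K\,\eta(w_1)\eta(w_2)$ against the sum of all four factors. The crucial step is that for each variable these factors combine into $\sum_{n>0}(w_i/z)^n+\sum_{n>0}(z/w_i)^n=\delta(w_i/z)-1$. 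The two $-1$ terms reassemble exactly $-2\left[K\eta(w_1)\eta(w_2)\right]_1=-2M_2$, while each $\delta(w_i/z)$ localizes the corresponding variable at $w_i=z$; evaluating $\left[K(z,w_2)\eta(w_2)\right]_1=\frac12\eta_0+\eta_-(z/q)$ and $\left[K(w_1,z)\eta(w_1)\right]_1=\frac12\eta_0+\eta_+(zq)$ then produces precisely $\eta(z)\tilde\eta(z)$. Combining, $A_-(z)-A_+(z)+2M_2=\eta(z)\tilde\eta(z)$, which is the identity sought.

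The main obstacle is this last localization: the kernel $K$ is not symmetric in $w_1,w_2$, so one must keep careful track of how the two $\delta$-specializations interact with it. In particular, I expect the verification that $\left[K(z,w_2)\eta(w_2)\right]_1$ and $\left[K(w_1,z)\eta(w_1)\right]_1$ reproduce the \emph{shifted} pieces $\eta_-(z/q)$ and $\eta_+(zq)$ respectively (the shifts being generated by the factor $q$ inside $K$), together with the check that the non-$\delta$ remainder is exactly $2M_2$ and not some neighbouring combination, to be where the real content lies. Everything else is routine series algebra, and the parallel between the shifts here and those appearing on the right-hand side via (\ref{eta}) is precisely what forces the two sides to agree.
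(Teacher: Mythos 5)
Your proposal is correct and follows essentially the same route as the paper's own proof: both reduce the bilinear identity to the scalar identity obtained by dividing by $\tau_-(z)\tau_+(z)$, use (\ref{partial-tau}) and (\ref{eta}) to collapse the right-hand side to $\eta(z)\bigl(M_1+\eta_-(z/q)+\eta_+(zq)\bigr)$, and assemble the four multiplication factors from (\ref{eta-tau-}), (\ref{eta-tau+}) together with $2M_2$ into $\delta(w_1/z)+\delta(w_2/z)$ acting on the kernel of (\ref{M_2}), which localizes to the same expression. The only difference is cosmetic: you write the combination as $\delta(w_i/z)-1$ and let the $-1$'s cancel the added $2M_2$, whereas the paper includes $2M_2$ from the start to complete the delta functions directly.
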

\begin{proof}
{}From (\ref{eta-tau-}), (\ref{eta-tau+}) and (\ref{M_2}), we have
\bb
&&{\partial_{t_2} \tau_-(z)\over \tau_-(z)}-
{\partial_{t_2} \tau_+(z)\over \tau_+(z)}+2 M_2\nn\\
&&
=\left[
\left(\delta(w_1/z)+\delta(w_2/z)\right)
\left({1\over 2}+\frac{q w_2/w_1}{1-qw_2/w_1}\right)\eta(w_1)\eta(w_2)\right]_{1,w_1,w_2}\nn\\
&&=M_1\eta(z)+
\eta(z)
\left[\frac{qw_2/z}{1-qw_2/z}\eta(w_2)\right]_{1,w_2}
+
\eta(z)
\left[\frac{qz/w_1}{1-qz/w_1}\eta(w_1)\right]_{1,w_1}
\nn\\
&&=M_1\eta(z)+
\eta(z)\eta_-(z/q)+
\eta(z)
\eta_+(zq).\nn
\nn\ee
Using (\ref{eta}) and (\ref{partial-tau}) (with $t=t_1$), we have the result.
\end{proof}

Finally, we study the Hirota equation involving the third time $t_3$.
\begin{proposition}\label{t_3}
Calculating $\partial_{t_3}$, $\partial_{t_2}$ and $\partial_{t_1}$ by using
the Poisson brackets given in Definition \ref{PBtime}, we have
\bb
&&\left(D_{t_3}+3 M_3\right)\tau_-(z)\cdot \tau_+(z)\\
&&=
{1\over 2}\varepsilon \left(D_{t_2}+2M_2\right) \tau_-(z/q)\cdot \tau_+(zq)
+{1\over 2}\varepsilon \left( D_{t_1}+M_1\right)^2 \tau_-(z/q)\cdot \tau_+(zq),\nn\\
&&\left(D_{t_1}+ M_1\right)^3\tau_-(z)\cdot \tau_+(z)\\
&&=
2\varepsilon \left(D_{t_2}+2M_2\right) \tau_-(z/q)\cdot \tau_+(zq)
-\varepsilon \left( D_{t_1}+M_1\right)^2 \tau_-(z/q)\cdot \tau_+(zq).\nn
\ee
\end{proposition}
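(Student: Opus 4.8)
The plan is to handle the two displayed equations separately, mirroring the computation that established Eq.~(\ref{To-2}) for the second time. For the first equation I would begin exactly as in the $t_2$ case: divide through by $\tau_-(z)\tau_+(z)$ and compute
\[
\frac{\partial_{t_3}\tau_-(z)}{\tau_-(z)}-\frac{\partial_{t_3}\tau_+(z)}{\tau_+(z)}+3M_3 ,
\]
using $\partial_{t_3}*=\{M_3,*\}$ (Definition~\ref{PBtime}), the explicit kernel in (\ref{M_3}), and the elementary brackets (\ref{eta-tau-}), (\ref{eta-tau+}). Through those brackets each factor $\eta(w_i)$ contributes a geometric series $\sum_{n>0}(w_i/z)^n$ from $\tau_-$ and $\sum_{n>0}(z/w_i)^n$ from $\tau_+$, whose sum is $\delta(w_i/z)-1$. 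The three $(-1)$ terms reassemble precisely into $-3M_3$ and cancel the additive $3M_3$, leaving
\[
\left[\Bigl(\sum_{i=1}^3\delta(w_i/z)\Bigr)\Bigl(\frac13+\frac{qw_3/w_2}{(1-qw_2/w_1)(1-qw_3/w_2)}\Bigr)\eta(w_1)\eta(w_2)\eta(w_3)\right]_1 .
\]

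Next I would evaluate the three delta contributions by the substitution rule $[\delta(w/z)g(w)]_{1,w}=g(z)$, which sets one $w_i=z$ and reduces each term to $\eta(z)$ times a two-variable constant term. Expanding the rational kernel in the prescribed series sense and collecting the survivors should produce, as in the $t_2$ case, expressions built from $\eta_-(z/q)$ and $\eta_+(zq)$ together with the lower integrals $M_1,M_2$. For the right-hand side I would use the already-established $t_2$ formula and the first flow (\ref{partial-tau}) to rewrite $(D_{t_2}+2M_2)\tau_-(z/q)\cdot\tau_+(zq)$ and $(D_{t_1}+M_1)^2\tau_-(z/q)\cdot\tau_+(zq)$, and then clear the shift via the bridge identity $\varepsilon\,\tau_-(z/q)\tau_+(zq)=\eta(z)\tau_-(z)\tau_+(z)$ coming from (\ref{eta}). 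Matching the two sides yields the first equation.

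The second equation contains no $M_3$ or $D_{t_3}$, so I would derive it purely from the first and second flows. I would expand $(D_{t_1}+M_1)^3=D_{t_1}^3+3M_1D_{t_1}^2+3M_1^2D_{t_1}+M_1^3$ and evaluate each Hirota power of $\tau_-(z)\cdot\tau_+(z)$ using $\partial_{t_1}=\{M_1,*\}=\{\eta_0,*\}$ from (\ref{partial-tau}) together with the first-flow evolution (\ref{BO}) of $\eta_\pm$; the right-hand side is handled by the same $t_2$-formula-plus-bridge rewriting. As a consistency check I would confirm that (first equation) $+\tfrac18$(second equation) reproduces exactly Eq.~(\ref{To-3}) obtained from the Hirota--Miwa expansion.

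The hard part will be the bookkeeping in the delta-function evaluation. Because the kernel in (\ref{M_3}) is \emph{not} symmetric in $w_1,w_2,w_3$, the three substitutions $w_i=z$ give genuinely different two-variable constant terms, each requiring a careful partial-fraction and series manipulation to be reorganized into the shifted-argument objects $\eta_\pm(z/q),\eta_\pm(zq)$ and their products. Handling the two shifts $z/q$ and $zq$ simultaneously on the right-hand side, where the clean $\delta(w_i/z)-1$ cancellation of the equal-argument case no longer applies directly, is the delicate step; once the pieces are correctly collected the matching becomes mechanical, exactly as in the $t_2$ proof.
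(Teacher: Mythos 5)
Your proposal is correct and follows essentially the same route as the paper: the paper proves the proposition by establishing four lemmas that compute $(D_{t_3}+3M_3)\tau_-(z)\cdot\tau_+(z)$, $(D_{t_1}+M_1)^3\tau_-(z)\cdot\tau_+(z)$, $(D_{t_2}+2M_2)\tau_-(z/q)\cdot\tau_+(zq)$ and $(D_{t_1}+M_1)^2\tau_-(z/q)\cdot\tau_+(zq)$ via the brackets (\ref{eta-eta}), (\ref{eta-tau-}), (\ref{eta-tau+}) and the kernels (\ref{M_2}), (\ref{M_3}), exactly the delta-function/constant-term bookkeeping and shifted-argument handling you describe, and then matches the resulting expressions through $\varepsilon\,\tau_-(z/q)\tau_+(zq)=\eta(z)\tau_-(z)\tau_+(z)$. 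Your consistency check that the first equation plus one eighth of the second reproduces Eq.~(\ref{To-3}) is also exactly the paper's corollary.
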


\begin{corollary}
We recover (\ref{To-3}) from the the Hamiltonian structure.
\end{corollary}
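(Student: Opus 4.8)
The plan is to obtain (\ref{To-3}) as a linear combination of the two bilinear identities established in Proposition \ref{t_3}; call them the first and the second identity. Both are equalities between bilinear expressions in $\tau_\pm$, and (\ref{To-3}) is an equality of exactly the same type, so the natural strategy is to find scalars $a,b$ such that $a$ times the first identity plus $b$ times the second reproduces (\ref{To-3}) term by term.

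First I would match the two distinguished left-hand-side building blocks. The combination $(D_{t_3}+3M_3)\tau_-(z)\cdot\tau_+(z)$ occurs only in the first identity, with coefficient $1$, and in (\ref{To-3}) with coefficient $1$; the combination $(D_{t_1}+M_1)^3\tau_-(z)\cdot\tau_+(z)$ occurs only in the second identity, with coefficient $1$, and in (\ref{To-3}) with coefficient ${1\over 8}$. These two matchings force $a=1$ and $b={1\over 8}$.

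Next I would verify that the right-hand sides are consistent with this choice. Writing $C=\varepsilon(D_{t_2}+2M_2)\tau_-(z/q)\cdot\tau_+(zq)$ and $E=\varepsilon(D_{t_1}+M_1)^2\tau_-(z/q)\cdot\tau_+(zq)$, the right-hand side of the first identity is ${1\over 2}C+{1\over 2}E$ and that of the second is $2C-E$. Then $1\cdot\left({1\over 2}C+{1\over 2}E\right)+{1\over 8}\left(2C-E\right)={3\over 4}C+{3\over 8}E$, which is precisely the right-hand side of (\ref{To-3}). Hence $a=1$, $b={1\over 8}$ yields the desired equation, and the corollary follows.

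Since all the analytic content---the derivation of the two identities from the Poisson brackets via (\ref{eta-tau-}), (\ref{eta-tau+}), (\ref{M_2}) and (\ref{M_3}) together with the expansion (\ref{eta})---is already contained in Proposition \ref{t_3}, there is no genuine obstacle here: the corollary is pure linear algebra in the space of bilinear expressions. The only point requiring care is the bookkeeping, namely checking that no stray terms survive and that the factors of $\varepsilon$ and the shifts $z/q$, $zq$ match identically on both sides; this is routine once the coefficients $a=1$, $b={1\over 8}$ are fixed.
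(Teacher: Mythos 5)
Your proposal is correct and is exactly the (implicit) argument the paper intends: adding the first identity of Proposition \ref{t_3} to one eighth of the second reproduces both sides of (\ref{To-3}), since $\tfrac12 C+\tfrac12 E+\tfrac18(2C-E)=\tfrac34 C+\tfrac38 E$. The coefficient matching and bookkeeping you describe are all that is needed.
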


\begin{proof}[Proof of Proposition \ref{t_3}]
They follow from Lemmas \ref{Lemma-1}, \ref{Lemma-2},\ref{Lemma-3} and \ref{Lemma-4}
below.
\end{proof}

\begin{lemma}\label{Lemma-1}
{}From (\ref{eta-tau-}), (\ref{eta-tau+}) and (\ref{M_3}), we have
\begin{eqnarray*}
&&{\left(D_{t_3}+3M_3\right) \tau_-(z)\cdot \tau_+(z)\over  \tau_-(z)\cdot \tau_+(z)}\\
&&=\eta(z)\left(M_2+{1\over 2}M_1^2+M_1(\eta_+(zq)+\eta_-(z/q))+\eta_+(zq)\eta_-(z/q)\right)\\
&&+\eta(z)\left[
\left({qw_1/w_2\over 1-q w_1/w_2}{qw_2/z\over 1-q w_2/z}\right.\right.\\
&&\qquad\qquad \left.\left.
+{qw_2/w_1\over 1-q w_2/w_1}{qz/w_2\over 1-qz/ w_2}\right)
\eta(w_1)\eta(w_2)\right]_{1,w_1,w_2}.
\nn\end{eqnarray*}
\end{lemma}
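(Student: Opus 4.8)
The plan is to compute the left-hand side straight from the definitions, turning every object into a constant-term extraction of a product of $\eta(w_i)$'s, and then to read off the claimed structure. Since $\partial_{t_3}*=\{M_3,*\}$ by Definition \ref{PBtime}, and since the constant-term symbol $[\,\cdot\,]_1$ commutes with the Poisson bracket (the bracket acts only on the $\alpha_n$, whereas the $w_i$ are inert formal variables, so $\{[g(w)]_1,h\}=[\{g(w),h\}]_1$), I would first unfold $\{M_3,\tau_\pm(z)\}$ by the Leibniz rule. Writing $M_3=[K_3\,\eta(w_1)\eta(w_2)\eta(w_3)]_1$ with $K_3={1\over 3}+{qw_3/w_2\over(1-qw_2/w_1)(1-qw_3/w_2)}$ as in (\ref{M_3}), and applying (\ref{eta-tau-}), (\ref{eta-tau+}) to each of the three factors, I obtain
\[
\partial_{t_3}\tau_-(z)=\tau_-(z)\Bigl[K_3\,\eta(w_1)\eta(w_2)\eta(w_3)\,{\textstyle\sum_i} s(w_i/z)\Bigr]_1,
\]
together with the analogue for $\tau_+$ carrying $-s(z/w_i)$, where $s(x)=\sum_{n>0}x^n=x/(1-x)$. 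Forming $D_{t_3}\tau_-(z)\cdot\tau_+(z)=(\partial_{t_3}\tau_-)\tau_+-\tau_-(\partial_{t_3}\tau_+)$ and dividing by $\tau_-(z)\tau_+(z)$, the two contributions add and produce the factor $\sum_i\bigl(s(w_i/z)+s(z/w_i)\bigr)$.

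The decisive simplification is the formal-series identity $s(x)+s(1/x)=\delta(x)-1$, which gives $\sum_i(s(w_i/z)+s(z/w_i))=\sum_i\delta(w_i/z)-3$. The spurious $-3$ is cancelled exactly by the $+3M_3$ on the left-hand side, leaving
\[
{(D_{t_3}+3M_3)\tau_-(z)\cdot\tau_+(z)\over\tau_-(z)\cdot\tau_+(z)}=\Bigl[K_3\,\eta(w_1)\eta(w_2)\eta(w_3)\,{\textstyle\sum_i}\delta(w_i/z)\Bigr]_1 .
\]
I would then rewrite the kernel as $K_3={1\over 3}+s(qw_3/w_2)+s(qw_2/w_1)\,s(qw_3/w_2)$ (using ${1\over 1-qw_2/w_1}=1+s(qw_2/w_1)$), which cleanly separates the pieces destined to become $M_1,M_2$ from the genuinely two-variable remainder.

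Finally I would evaluate the three $\delta(w_i/z)$ insertions one at a time. Each enforces $w_i=z$ under the constant-term pairing and pulls out an overall $\eta(z)$, reducing the problem to double constant terms in the remaining two variables. Using the elementary evaluations $[\eta(w)]_1=M_1$, $[s(qw/z)\eta(w)]_1=\eta_-(z/q)$, $[s(qz/w)\eta(w)]_1=\eta_+(zq)$, and $[({1\over 2}+s(qw_2/w_1))\eta(w_1)\eta(w_2)]_1=M_2$ from (\ref{M_2}) (so that a lone $s$-factor contributes $M_2-{1\over 2}M_1^2$), I collect the output: the three ${1\over 3}$-pieces together with the $-{1\over 2}M_1^2$ from the $M_2$-term combine into ${1\over 2}M_1^2$, the $M_2$-piece survives, the mixed single-variable terms assemble into $M_1(\eta_+(zq)+\eta_-(z/q))$ and $\eta_+(zq)\eta_-(z/q)$, and the two surviving double-$s$ terms reproduce exactly the residual two-variable bracket of the statement.

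I expect the main obstacle to be this last bookkeeping step. Because $K_3$ is not symmetric in $w_1,w_2,w_3$, the three delta insertions land on inequivalent slots and cannot be treated in one stroke; and matching the two leftover double-$s$ contributions to the stated kernel requires a careful relabelling plus a swap $w_1\leftrightarrow w_2$ that is legitimate only because $\eta(w_1)\eta(w_2)$ is symmetric under the constant-term pairing. The other place demanding care is keeping the expansion domains of the geometric series consistent with the prescription following (\ref{Ibar_k}), so that substituting $w_i=z$ inside each $s$-factor is justified.
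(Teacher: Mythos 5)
Your computation is correct and is essentially the argument the paper intends: the paper states this lemma without an explicit proof, asserting only that it follows from (\ref{eta-tau-}), (\ref{eta-tau+}) and (\ref{M_3}), and your direct constant-term calculation — Leibniz expansion of $\{M_3,\tau_\pm\}$, the identity $s(x)+s(1/x)=\delta(x)-1$ cancelling against $+3M_3$, and evaluation of the three $\delta(w_i/z)$ insertions — fills in exactly that computation. The bookkeeping checks out: the three $\tfrac13 M_1^2$ pieces minus the $\tfrac12 M_1^2$ from the $M_2$-identification give $\tfrac12 M_1^2$, and the two surviving double-$s$ terms match the stated kernel after the symmetric relabelling $w_1\leftrightarrow w_2$.
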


\begin{lemma}\label{Lemma-2}
{}From  (\ref{eta-eta}), (\ref{eta-tau-}) and (\ref{eta-tau+}), we have
\begin{eqnarray*}
&&{\left(D_{t_1}+M_1\right)^3 \tau_-(z)\cdot \tau_+(z)\over  \tau_-(z)\cdot \tau_+(z)}\\
&&=\eta(z)\Biggl(4M_2-M_1^2+M_1(\eta_+(zq)+\eta_-(z/q))-2\eta_+(zq)\eta_-(z/q)\Biggr)\\
&&+2\eta(z)\left[
\left({qw_1/w_2\over 1-q w_1/w_2}+{qw_2/w_1\over 1-q w_2/w_1}\right)\right.\\
&&\qquad\qquad\left.\times
\left(
{qw_2/z\over 1-q w_2/z}+{qz/w_2\over 1-qz/ w_2}\right)
\eta(w_1)\eta(w_2)\right]_{1,w_1,w_2}\\
&&-\eta(z)\left[
\left({qw_1/w_2\over 1-q w_1/w_2}-{qw_2/w_1\over 1-q w_2/w_1}\right)\right.\\
&&\qquad\qquad\left.\times
\left(
{qw_2/z\over 1-q w_2/z}-{qz/w_2\over 1-qz/ w_2}\right)
\eta(w_1)\eta(w_2)\right]_{1,w_1,w_2}.
\nn\end{eqnarray*}
\end{lemma}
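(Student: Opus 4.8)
The plan is to reduce the Hirota cube to three cumulants and then evaluate each one with the Poisson brackets. Since $M_1=\eta_0$ is an integral of motion we have $\partial_{t_1}M_1=\{M_1,M_1\}=0$, so $M_1$ is constant along the $t_1$-flow. Using the conjugation identity $(\partial_a+M_1)^n=e^{-M_1a}\partial_a^n e^{M_1a}$, together with the general fact that $P(D_{t_1})f\cdot g=P(\partial_a)\bigl(f(t_1+a)g(t_1-a)\bigr)\big|_{a=0}$ for a polynomial $P$, the Hirota pair can be written as
\[
(D_{t_1}+M_1)^3\tau_-(z)\cdot\tau_+(z)=\partial_a^3\bigl(e^{M_1a}\,\tau_-(z;t_1+a)\,\tau_+(z;t_1-a)\bigr)\big|_{a=0},
\]
where $\tau_\pm(z;t_1\pm a)$ denotes the result of evolving $\tau_\pm(z)$ along the $t_1$-flow. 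Dividing by $\tau_-(z)\tau_+(z)$ and using the moment-cumulant expansion of $\partial_a^3e^{H}/e^{H}$ with $H(a)=M_1a+\log\tau_-(z;t_1+a)+\log\tau_+(z;t_1-a)$ gives
\[
\frac{(D_{t_1}+M_1)^3\tau_-(z)\cdot\tau_+(z)}{\tau_-(z)\cdot\tau_+(z)}=H'''(0)+3H'(0)H''(0)+H'(0)^3 .
\]
This is equivalent to the Leibniz expansion of $D_{t_1}^3$, but the cumulant form isolates the pieces cleanly.

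Next I would evaluate the three cumulants from (\ref{partial-tau}), that is $\partial_{t_1}\log\tau_-(z)=\eta_-(z)$ and $\partial_{t_1}\log\tau_+(z)=-\eta_+(z)$, together with $\partial_{t_1}M_1=0$. This yields $H'(0)=M_1+\eta_-(z)+\eta_+(z)=\eta(z)$, which is just (\ref{To-1}) divided by $\tau_-(z)\tau_+(z)$ via (\ref{eta}); further, $H''(0)=\partial_{t_1}\eta_-(z)-\partial_{t_1}\eta_+(z)$ and $H'''(0)=\partial_{t_1}^2(\eta_-(z)+\eta_+(z))=\partial_{t_1}^2\eta(z)=\eta(z)\bigl(A(z)^2+\partial_{t_1}A(z)\bigr)$, where $A(z)=\eta_+(z)-\eta_+(zq)-\eta_-(z)+\eta_-(z/q)$ is the factor appearing in (\ref{BO}), so that $\partial_{t_1}\eta(z)=\eta(z)A(z)$. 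In particular the common factor $\eta(z)=H'(0)$ pulls out of every term, reproducing the overall $\eta(z)$ in the statement.

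The remaining inputs are $\partial_{t_1}\eta_\pm(z)$ (for $H''$) and $\partial_{t_1}A(z)$ (for $H'''$). I would obtain these from (\ref{eta-eta}): first write $\eta_\pm(z)$ as single constant terms, which by (\ref{eta-tau-}) and (\ref{eta-tau+}) read $\eta_-(z)=[\frac{w/z}{1-w/z}\eta(w)]_{1,w}$ and $\eta_+(z)=[\frac{z/w}{1-z/w}\eta(w)]_{1,w}$; then apply $\partial_{t_1}=\{\eta_0,\cdot\}$ under the constant-term symbol, using $\{\eta_0,\eta(w)\}=\eta(w)A(w)$ from (\ref{eta-eta}). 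Each $\partial_{t_1}\eta_\pm(z)$ thereby becomes a two-variable constant term $[\cdots\eta(w_1)\eta(w_2)]_{1,w_1,w_2}$, and likewise for $\partial_{t_1}A(z)$. Substituting into $H'''(0)+3H'(0)H''(0)+H'(0)^3$ and factoring out $\eta(z)$ leaves a sum of two-variable constant terms together with the purely local quadratic coming from $\eta(z)^2$ and $A(z)^2$; as a sanity check, $A(z)^2$ directly contributes the cross term $-2\eta_+(zq)\eta_-(z/q)$ seen in the statement.

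The hard part will be the final reorganization. I expect to use the identity $\frac{w/z}{1-w/z}+\frac{z/w}{1-z/w}=\delta(w/z)-1$, exactly as in the proof of the proposition for $t_2$, to split each kernel into a $\delta$-part and a constant part. The $\delta$-parts collapse one integration variable onto $z$, producing $M_1$, the shifted fields $\eta_+(zq)$ and $\eta_-(z/q)$, and, upon reassembling the kernel (\ref{M_2}) of $M_2$, the local combination $4M_2-M_1^2+M_1(\eta_+(zq)+\eta_-(z/q))-2\eta_+(zq)\eta_-(z/q)$. The surviving genuinely bilinear remainder must then be matched to the symmetric-minus-antisymmetric kernel of the statement. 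Verifying that the unshifted contributions $\eta_\pm(z)$ cancel, that the coefficient of $M_2$ comes out to be exactly $4$, and that the residue organizes into precisely $2(\cdots+\cdots)(\cdots+\cdots)-(\cdots-\cdots)(\cdots-\cdots)$ is the delicate bookkeeping at the heart of the proof.
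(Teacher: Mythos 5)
Your setup is correct and is essentially the route the paper takes: the paper gives no more detail for this lemma than the citation of (\ref{eta-eta}), (\ref{eta-tau-}), (\ref{eta-tau+}), and its worked proof of the $t_2$ proposition uses exactly the machinery you describe --- kernel representations $\eta_-(z)=[\tfrac{w/z}{1-w/z}\eta(w)]_{1,w}$, $\eta_+(z)=[\tfrac{z/w}{1-z/w}\eta(w)]_{1,w}$, the bracket $\{\eta_0,\eta(w)\}=\eta(w)A(w)$, and the collapse $\tfrac{w/z}{1-w/z}+\tfrac{z/w}{1-z/w}=\delta(w/z)-1$. Your cumulant identity $(D_{t_1}+M_1)^3\tau_-\cdot\tau_+ \,/\, \tau_-\tau_+ = H'''(0)+3H'(0)H''(0)+H'(0)^3$ with $H'(0)=\eta(z)$, $H''(0)=\partial_{t_1}\eta_--\partial_{t_1}\eta_+$, $H'''(0)=\partial_{t_1}^2\eta(z)$ is verified correct, including the signs from differentiating $\log\tau_+(t_1-a)$.

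The one genuine shortfall is that the decisive step --- showing that after the $\delta$-collapse the unshifted fields cancel, the local part assembles into $4M_2-M_1^2+M_1(\eta_+(zq)+\eta_-(z/q))-2\eta_+(zq)\eta_-(z/q)$, and the bilinear remainder is exactly $2(\mathrm{sym})(\mathrm{sym})-(\mathrm{anti})(\mathrm{anti})$ --- is announced rather than performed, and that identity is the entire content of the lemma. A check you should add, which both confirms your plan lands on the right answer and organizes the bookkeeping: the claimed right-hand side equals $\eta(z)\bigl(2\cdot(\text{RHS of Lemma \ref{Lemma-3}})-(\text{RHS of Lemma \ref{Lemma-4}})\bigr)$ term by term, consistent with the second identity of Proposition \ref{t_3} together with $\varepsilon\,\tau_-(z/q)\tau_+(zq)=\eta(z)\tau_-(z)\tau_+(z)$ from (\ref{eta}). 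In your decomposition the symmetric-times-symmetric kernel arises from the $D_{t_2}$-type (even) combination $\partial_{t_1}(\eta_+(zq)+\eta_-(z/q))$ hidden in $H'''(0)$, while the antisymmetric-times-antisymmetric kernel comes from the $D_{t_1}^2$-type (odd-squared) pieces in $3H'H''$ and $A(z)^2$; tracking these two parities separately is what makes the coefficients $2$ and $-1$, and hence the coefficient $4$ of $M_2$, come out without a case-by-case slog.
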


\begin{lemma}\label{Lemma-3}
{}From (\ref{eta-tau-}), (\ref{eta-tau+}) and (\ref{M_2}), we have
\begin{eqnarray*}
&&{\left(D_{t_2}+2M_2\right) \tau_-(z/q)\cdot \tau_+(zq)\over  \tau_-(z/q)\cdot \tau_+(zq)}\\
&&=2M_2+M_1(\eta_+(zq)+\eta_-(z/q))\\
&&+\left[
\left({qw_1/w_2\over 1-q w_1/w_2}+{qw_2/w_1\over 1-q w_2/w_1}\right)\right.\\
&&\qquad\qquad\left.\times
\left(
{qw_2/z\over 1-q w_2/z}+{qz/w_2\over 1-qz/ w_2}\right)
\eta(w_1)\eta(w_2)\right]_{1,w_1,w_2}.
\nn\end{eqnarray*}
\end{lemma}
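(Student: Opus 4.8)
The plan is to unfold the Hirota derivative into a difference of logarithmic derivatives and then reduce the whole claim to a single constant-term extraction. First I would use the definition of $D_{t_2}$, together with the fact that the scalar $2M_2$ merely multiplies the ordinary product, to write
\[
\frac{(D_{t_2}+2M_2)\,\tau_-(z/q)\cdot\tau_+(zq)}{\tau_-(z/q)\,\tau_+(zq)}
=\frac{\partial_{t_2}\tau_-(z/q)}{\tau_-(z/q)}-\frac{\partial_{t_2}\tau_+(zq)}{\tau_+(zq)}+2M_2 .
\]
Since $2M_2$ already appears on the right-hand side of the lemma, this term passes straight through, and the real content is the evaluation of the two logarithmic-derivative terms.

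Next I would compute $\partial_{t_2}\tau_\pm=\{M_2,\tau_\pm\}$ from the formula (\ref{M_2}) for $M_2$. Applying the Leibniz rule to $\{\eta(w_1)\eta(w_2),\tau_\mp\}$ and substituting (\ref{eta-tau-}) and (\ref{eta-tau+}) at the shifted arguments $z/q$ and $zq$, the geometric series collapse through $\sum_{n>0}(qw_i/z)^n=\frac{qw_i/z}{1-qw_i/z}$ and $\sum_{n>0}(qz/w_i)^n=\frac{qz/w_i}{1-qz/w_i}$. The minus sign carried by (\ref{eta-tau+}) combines with the overall minus sign of the Hirota derivative, so both families of fractions enter with a plus sign, giving
\[
\frac{\partial_{t_2}\tau_-(z/q)}{\tau_-(z/q)}-\frac{\partial_{t_2}\tau_+(zq)}{\tau_+(zq)}
=\Bigl[\Bigl(\frac{1}{2}+\frac{qw_2/w_1}{1-qw_2/w_1}\Bigr)\Bigl(\sum_{i=1,2}\frac{qw_i/z}{1-qw_i/z}+\sum_{i=1,2}\frac{qz/w_i}{1-qz/w_i}\Bigr)\eta(w_1)\eta(w_2)\Bigr]_{1},
\]
where $[\,\cdot\,]_1$ denotes the constant term in $w_1$ and $w_2$.

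The decisive step exploits that $\eta(w_1)\eta(w_2)$ and the constant-term functional are invariant under $w_1\leftrightarrow w_2$, so the prefactor may be replaced by its symmetrization: $\frac{1}{2}+\frac{qw_2/w_1}{1-qw_2/w_1}$ becomes $\frac{1}{2}+\frac{1}{2}\bigl(\frac{qw_1/w_2}{1-qw_1/w_2}+\frac{qw_2/w_1}{1-qw_2/w_1}\bigr)$. Multiplying the symmetric summand $\frac{1}{2}(\cdots)$ by the already-symmetric second factor and using the $w_1\leftrightarrow w_2$ symmetry once more to collapse $\sum_{i=1,2}$ to its $i=2$ term, I recover exactly the double-sum term $\bigl[\bigl(\frac{qw_1/w_2}{1-qw_1/w_2}+\frac{qw_2/w_1}{1-qw_2/w_1}\bigr)\bigl(\frac{qw_2/z}{1-qw_2/z}+\frac{qz/w_2}{1-qz/w_2}\bigr)\eta(w_1)\eta(w_2)\bigr]_1$ on the right-hand side of the lemma.

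Finally, the leftover $\frac{1}{2}$ of the prefactor gives, again by symmetry, $\bigl[\bigl(\frac{qw_2/z}{1-qw_2/z}+\frac{qz/w_2}{1-qz/w_2}\bigr)\eta(w_1)\eta(w_2)\bigr]_1$; since this factor is free of $w_1$, the $w_1$-constant term produces $\eta_0=M_1$, and matching the geometric-series indices against $\eta(w_2)=\sum_m\eta_m w_2^{-m}$ identifies the remaining $w_2$-constant term as $\eta_-(z/q)+\eta_+(zq)$. This yields the term $M_1(\eta_+(zq)+\eta_-(z/q))$ and completes the proof. I expect the one genuine obstacle to be this last bookkeeping — keeping the shifts $z/q$, $zq$, the sign of (\ref{eta-tau+}), and the index-matching with the Fourier modes of $\eta$ consistent — rather than any conceptual difficulty, since the symmetrization is precisely what makes the two required pieces separate cleanly.
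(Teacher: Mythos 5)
Your computation is correct and follows exactly the route the paper intends: the paper states Lemma \ref{Lemma-3} without proof, but its proof of the analogous $t_2$-proposition for $\tau_\pm(z)$ proceeds by the same unfolding of $D_{t_2}$ into logarithmic derivatives, Leibniz on $\{M_2,\tau_\mp\}$ via (\ref{eta-tau-})--(\ref{eta-tau+}), and $w_1\leftrightarrow w_2$ symmetrization of the constant term. Your sign and index bookkeeping (the shifts $z/q$, $zq$, and the identification of the leftover piece as $M_1(\eta_+(zq)+\eta_-(z/q))$) all check out.
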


\begin{lemma}\label{Lemma-4}
{}From  (\ref{eta-eta}), (\ref{eta-tau-}) and (\ref{eta-tau+}), we have
\begin{eqnarray*}
&&{\left(D_{t_1}+M_1\right)^2 \tau_-(z/q)\cdot \tau_+(zq)\over  \tau_-(z/q)\cdot \tau_+(zq)}\\
&&=M_1^2+M_1(\eta_+(zq)+\eta_-(z/q))+2\eta_+(zq)\eta_-(z/q)\\
&&+\left[
\left({qw_1/w_2\over 1-q w_1/w_2}-{qw_2/w_1\over 1-q w_2/w_1}\right)\right.\\
&&\qquad\qquad\left.\times
\left(
{qw_2/z\over 1-q w_2/z}-{qz/w_2\over 1-qz/ w_2}\right)
\eta(w_1)\eta(w_2)\right]_{1,w_1,w_2}.
\nn\end{eqnarray*}
\end{lemma}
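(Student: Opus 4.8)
The plan is to rewrite the bilinear operator in terms of logarithmic derivatives of the two tau-functions, and then to collapse the whole statement to a single identity between constant-term extractions of $\eta(w_1)\eta(w_2)$, proceeding exactly as in the proof of Lemma~\ref{Lemma-2}. Put $\partial=\partial_{t_1}=\{M_1,\cdot\}$, $F=\tau_-(z/q)$ and $G=\tau_+(zq)$. The elementary Hirota identity reads
\[
\frac{(D_{t_1}+M_1)^2\,F\cdot G}{F\,G}
=\Bigl(\frac{\partial^2F}{F}-\frac{(\partial F)^2}{F^2}\Bigr)
+\Bigl(\frac{\partial^2G}{G}-\frac{(\partial G)^2}{G^2}\Bigr)
+\Bigl(\frac{\partial F}{F}-\frac{\partial G}{G}+M_1\Bigr)^2 .
\]
By (\ref{eta-tau-}) and (\ref{eta-tau+}) we have $\partial F/F=\eta_-(z/q)$ and $\partial G/G=-\eta_+(zq)$, so the first two brackets sum to $\partial\bigl(\eta_-(z/q)-\eta_+(zq)\bigr)$ and the last term is $\bigl(\eta_-(z/q)+\eta_+(zq)+M_1\bigr)^2$. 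Expanding the square and comparing with the asserted right-hand side, the Lemma collapses to the single core identity
\[
\partial\bigl(\eta_-(z/q)-\eta_+(zq)\bigr)
+M_1\bigl(\eta_-(z/q)+\eta_+(zq)\bigr)
+\eta_-(z/q)^2+\eta_+(zq)^2=\mathcal{B},
\]
where $\mathcal{B}$ denotes the double-bracket term on the right of the statement.

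Next I would turn every term into a constant-term extraction. Writing $\eta_-(z/q)=[\frac{qw/z}{1-qw/z}\eta(w)]_{1,w}$ and $\eta_+(zq)=[\frac{qz/w}{1-qz/w}\eta(w)]_{1,w}$, as already used for Lemmas~\ref{Lemma-1}--\ref{Lemma-3}, the three ``product'' terms $\eta_-(z/q)^2$, $\eta_+(zq)^2$ and $M_1(\eta_-(z/q)+\eta_+(zq))$ become double extractions $[\,\cdots\eta(w_1)\eta(w_2)]_{1,w_1,w_2}$. The flow term is handled by (\ref{eta-eta}): since $\{M_1,\eta(w_2)\}=[\eta(w_1)\eta(w_2)\,S(w_2/w_1)]_{1,w_1}$ with $S(x)=\sum_{l\neq0}{\rm sgn}(l)(1-q^{|l|})x^l$, one gets
\[
\partial\bigl(\eta_-(z/q)-\eta_+(zq)\bigr)
=\Bigl[\Bigl(\frac{qw_2/z}{1-qw_2/z}-\frac{qz/w_2}{1-qz/w_2}\Bigr)S(w_2/w_1)\,\eta(w_1)\eta(w_2)\Bigr]_{1,w_1,w_2}.
\]
The crucial step is the splitting $S(w_2/w_1)=Q+T$ with
\[
Q=\frac{qw_1/w_2}{1-qw_1/w_2}-\frac{qw_2/w_1}{1-qw_2/w_1},
\qquad
T=\sum_{l>0}(w_2/w_1)^l-\sum_{l>0}(w_1/w_2)^l,
\]
which separates the ``smooth'' $q$-weighted kernel $Q$ from the ``contact'' part $T$.

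Finally I would match the two pieces. The $Q$-part of the flow term reproduces $\mathcal{B}$ verbatim, because $Q$ is precisely the first factor of $\mathcal{B}$ while $\frac{qw_2/z}{1-qw_2/z}-\frac{qz/w_2}{1-qz/w_2}$ is its second factor. For the $T$-part I would extract $w_1$ first, using $[T\,\eta(w_1)]_{1,w_1}=\eta_+(w_2)-\eta_-(w_2)$, which collapses the double extraction to
\[
\Bigl[\Bigl(\frac{qw_2/z}{1-qw_2/z}-\frac{qz/w_2}{1-qz/w_2}\Bigr)
\bigl(\eta_+(w_2)-\eta_-(w_2)\bigr)\eta(w_2)\Bigr]_{1,w_2};
\]
a short residue computation gives $-\eta_-(z/q)^2-\eta_+(zq)^2-M_1(\eta_-(z/q)+\eta_+(zq))$, cancelling exactly the three remaining terms on the left of the core identity and finishing the proof. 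The hard part will be the regularization bookkeeping in the splitting $S=Q+T$: one must keep consistent track of which contributions are ``contact'' terms that rebuild the squares $\eta_\pm^2$ and the products $M_1\eta_\pm$, and which are the ``smooth'' terms surviving into $\mathcal{B}$, applying the prescribed expansion $1/(1-qw_j/w_i)=\sum_{n\geq0}(qw_j/w_i)^n$ uniformly throughout — the same delicate point already met in Lemma~\ref{Lemma-2}.
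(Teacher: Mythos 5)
Your proof is correct and is essentially the computation the paper intends but omits: you use exactly the ingredients cited in the lemma, namely (\ref{eta-tau-}), (\ref{eta-tau+}) for $\partial\tau_\pm/\tau_\pm$ and (\ref{eta-eta}) for the flow term, and I have checked that the splitting $S=Q+T$ and the final constant-term extraction do produce $-\eta_-(z/q)^2-\eta_+(zq)^2-M_1(\eta_-(z/q)+\eta_+(zq))$, so the core identity closes as claimed.
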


\section*{Acknowledgments}
The authors thank Kenji Kajiwara, Paul Wiegmann and  Masatoshi Noumi for 
stimulating discussions. YT is supprted by The Promotion and 
Mutual Aid Corporation for Private Schools of Japan.

\end{document}